\newmdenv[
  linewidth=1pt,
  roundcorner=5pt,
  backgroundcolor=blue!10,
  linecolor=black,
  innertopmargin=5pt,
  innerbottommargin=5pt,
  innerleftmargin=5pt,
  innerrightmargin=5pt
]{problem}
\DeclareMathOperator{\diag}{diag} 
\begin{document}
\title{Do Stubborn Users Always Cause More Polarization and Disagreement? A Mathematical Study}

\author{Mohammad Shirzadi}
\email{mohammad.shirzadi@anu.edu.au}
\affiliation{
  \institution{The Australian National University}
  \city{Canberra}
  \country{Australia}
}

\author{Ahad N. Zehmakan}
\email{ahadn.zehmakan@anu.edu.au}
\affiliation{
  \institution{The Australian National University}
  \city{Canberra}
  \country{Australia}
}

\begin{abstract}
We study how the stubbornness of social network users influences opinion polarization and disagreement. Our work is in the context of the popular Friedkin-Johnson opinion formation model, where users update their opinion as a function of the opinion of their connections and their own innate opinion. Stubbornness then is formulated in terms of the stress a user puts on its innate opinion.

We examine two scenarios: one where all nodes have uniform stubbornness levels (homogeneous) and another where stubbornness varies among nodes (inhomogeneous). In the homogeneous scenario, we prove that as the network's stubbornness factor increases, the polarization and disagreement index grows.
In the more general inhomogeneous scenario, our findings surprisingly demonstrate that increasing the stubbornness of some users (particularly, neutral/unbiased users) can reduce the polarization and disagreement. We characterize specific conditions under which this phenomenon occurs. Finally, we conduct an extensive set of experiments on real-world network data to corroborate and complement our theoretical findings.
\end{abstract}

\keywords{Stubbornness, Polarization-Disagreement, Social Networks, Friedkin-Johnsen Opinion Dynamic, Algorithmic Graph Data Mining}

\maketitle

\section{Introduction}
Social media experienced significant growth in the early 2000s with the advent of several pivotal platforms. This rise of large social media networks has significantly advanced global information sharing and human interaction. While the enhanced accessibility of information and heightened online engagement are positive aspects of this social transformation, there is mounting concern regarding the potential of these platforms to exacerbate societal polarization and division. For instance, studies by Hart and Nisbet~\cite{hart2012boomerang} and DellaPosta et al.~\cite{dellaposta2020pluralistic} highlight the severity of this issue within the contexts of climate change and political discourse.

There has been a fast-growing multidisciplinary research effort, cf.~\cite{huang2022pole,haddadan2021repbublik,chitra2020analyzing,musco2018minimizing}, to understand how individuals form their opinions through social interactions and what is the root cause of phenomena such as polarization and disagreement in social networks. From a mathematical perspective, it is natural to introduce and study models which simulate the opinion formation process.

One opinion formation model which has garnered a significant amount of attention, cf.~\cite{friedkin1990social,out2024impact}, is the Friedkin-Johnson (FJ) model, where each user has an opinion between $-1$ and $+1$. Then, over time, each user updates its opinion as a function of the opinion of its neighbors and its own innate opinion. The FJ model owns its popularity mostly to the facilitation of a rich framework to mathematically formulate and investigate concepts such as polarization and agreement, cf.~\cite{musco2018minimizing, zhu2021minimizing,racz2023towards}. Polarization is defined as the variance in opinions of the users, while the disagreement between two connected users is measured by the difference in their opinions.

The formation of polarization and disagreement in society appears to be influenced by intrinsic human social dynamics, as well as the structural and functional characteristics inherent to social media platforms. One example of human behavior that affects polarization and disagreement in social networks is stubbornness. Stubbornness measures how much an individual adheres to their innate opinion or belief versus their willingness to change their opinions in response to their neighbors' views, cf.~\cite{abebe2018opinion}.

While the notion of stubbornness has been introduced and studied in various setups~\cite{steinberg2007age,wall1993susceptibility}, particularly in the FJ model~\cite{xu2022effects}, our understanding of how the stubbornness of users contributes to the formation of polarized communities or disagreement is very limited. The present paper aims to provide novel and deep insights on the relation between stubbornness parameter and polarization-disagreement ($\mathcal{PD}$) index. More precisely, we provide a collection of theoretical results, accompanied by experiments on real-world graph data, which characterize the settings where an increase in stubbornness will result in an increase or, surprisingly, decrease in polarization-disagreement. 

\textbf{Roadmap.} In the rest of this section, we provide preliminaries, give a short overview of our findings, and review related prior work. In Section~\ref{MAPD_section}, some theoretical foundations are established. In Section~\ref{homogeneous_section}, we analyze the homogeneous stubbornness case and prove that any increase in the stubbornness level amplifies the $\mathcal{PD}$ index. In Section~\ref{inhomogeneous_section}, we delve into inhomogeneous cases and prove that if the stubbornness of individuals with a neutral (zero) innate opinion increases, the $\mathcal{PD}$ will decrease. Experiments on some synthetic and real-world data sets are presented in Section~\ref{experiments}.
Finally, Section~\ref{conclusion} concludes the paper with a brief summary. 

\vspace{-0.2cm}

\subsection{Preliminaries}
\subsubsection{Basic Definitions}
Let $G=(V,E,w)$ be an undirected, weighted, and connected graph, where $V=\{v_1,\cdots,v_n\}$, $E\subseteq \{\{v_i,v_j\}: 1\le i,j\le n\}$, and $w: E \rightarrow \mathbb{R}^+$. We define $m=\sum_{\{u,v\} \in E} w_{uv}$ to be the sum of the weights of all edges (in the unweighted case where all weights are one, this sum is just the total numbers of edges). For each node $v_i$, $N(v_i)=\{u\in V|\{u,v_i\} \in E\}$ denotes the neighborhood of node $i$. The graph $G$ represents a social network, where each nod correspond to a user. An edge between two nodes indicates that the corresponding users are connected, and the weight of the edge shows the strength of their relation.

We define $\mathbf{D}$ to be the diagonal degree matrix given by $D_{ii}=\sum_{v_j \in N(v_i)} w_{v_iv_j}$, $\textbf{A} \in \mathbb{R}^{n \times n}$ to be the weighted adjacency matrix and $\textbf{L}:=\textbf{D}-\textbf{A}$ denotes the Laplacian matrix. For any connected graph $G$ as above, the Laplacian matrix is positive semi-definite, i.e., $\textbf{L} \succeq 0$ (for more details see, e.g.,~\cite{golub2013matrix}) and it can be written as 
\begin{equation*}
    \textbf{L}=\sum_{j=1}^n \lambda_j q_j q_j^T,
\end{equation*}
where $0=\lambda_1< \lambda_2 \leq \cdots \leq \lambda_n$ are ordered eigenvalues and $q_j$, for $j=1,\cdots,n$, are the corresponding orthonormal eigenvectors.

Given a symmetric matrix $\textbf{L} \in \mathbb{R}^{n \times n}$ with described eigendecomposition as above and function $f:\mathbb{R} \rightarrow \mathbb{R}$, define 
\begin{equation} \label{f(L)}
    f(\textbf{L})=\sum_{j=1}^n f(\lambda_j) q_j q_j^T,
\end{equation}
where we suppose that $f(y) \geq 0$ for $y \geq 0$, which yields $f(\textbf{L}) \succeq 0$ for $\textbf{L} \succeq 0$.

For $p \geq 1$ real number, the $l^p$-norm of vector $x=[x_1,x_2,\cdots,x_n]^T$ is defined as 
\footnote{
For $p=1$, we have the Taxicab norm or Manhattan norm, for $p=2$ we will have the Euclidean norm, and if p approaches infinity, we get the maximum norm $\|x\|_{\infty}:=\max_{i} |x_i|$.
}
\begin{equation*}
   \|x\|_p:=(\sum_{i=1}^n |x_i|^p )^{1/p}.
\end{equation*}
For simplicity, in the case of $p=2$, we use $\|x \|_2:=\| x\|$.  The norms $\|\textbf{A}\|_1$ and $\| \textbf{A}\|_{\infty}$ of a matrix $\textbf{A} \in \mathbb{R}^{n \times n}$ are defined as 
\begin{equation*}
    \|\textbf{A}\|_1=\max_{j} \sum_{i=1}^n |a_{ij}|, \quad     \|\textbf{A}\|_{\infty}=\max_{i} \sum_{j=1}^n |a_{ij}|. 
\end{equation*}

The Courant-Fischer theorem, cf.~\cite{horn2012matrix}, provides a variational formulation for the eigenvalues of a symmetric matrix, which can help derive bounds on the eigenvalues. In general, if $\textbf{A}$ is an $n \times  n$ symmetric matrix with eigenvalues $\lambda_1 \leq \lambda_2 \leq \cdots \leq \lambda_n$ with corresponding eigenvector $v_1,\cdots,v_n$, then
\begin{equation*}
    \lambda_1=\min_{x \neq 0} \frac{x^T \textbf{A} x}{x^T x}, \quad 
    \lambda_2=\min_{x \neq 0, x \perp v_1} \frac{x^T \textbf{A} x}{x^T x}, 
    \quad 
    \lambda_{\max}=\max_{x \neq 0} \frac{x^T \textbf{A} x}{x^T x}. 
\end{equation*}
Finally, if $a_i$ and $b_i$ are two distinct sequences, we will have 
\begin{equation}\label{multiplication_series}
\bigg(\sum_{i=1}^n a_i\bigg) \bigg(\sum_{i=1}^n b_i\bigg)= \sum_{i=1}^n a_i b_i+\sum_{i=1}^n \sum_{j \neq i} a_i b_j.
\end{equation}
\subsubsection{Friedkin-Johnson Dynamics}

The FJ model is a foundational framework for understanding opinion formation as described by Friedkin and Johnsen~\cite{friedkin1990social}. In this model, each node \( v_i \in V \) possesses an innate opinion \( s_i: v_i \rightarrow [-1,1] \), which is assumed to remain constant over time. This innate opinion represents an individual's perspective in a hypothetical social vacuum, free from external influences, and is typically shaped by factors such as background, location, religion, and other personal circumstances. Additionally, each node \( v_i \) is characterized by a \textit{stubbornness coefficient}, denoted as \( k_i > 0 \), which indicates the weight an individual assigns to their own opinion. A higher \( k_i \) value suggests a more stubborn individual \( v_i\), whereas a lower \( k_i \) value implies a more open-minded individual \( v_i \). Beyond their innate opinion, each node \( v_i\), also has an expressed opinion \( z_i^{(t)} \in [-1,1] \) at any given time \( t \). This expressed opinion evolves through a process of averaging, influenced by both the node's innate opinion (modulated by their stubbornness) and the expressed opinions of their neighbors. More precisely, the expressed opinion of node $v_i$ in time step $t$ is formulated as follows:

\begin{equation}\label{FJ}
    z_i^{(t)}=\frac{k_i s_i+\sum_{j \in N(v_i)} w_{v_iv_j} z_j^{(t-1)}}{k_i+\sum_{j \in N(v_i)} w_{v_iv_j}}. 
\end{equation}
It should be noted that in the classic FJ model, all \( k_i \) values are set to $1$. We refer to this special case as classic or vanilla FJ model in the rest of the paper.

The following lemma demonstrates that the equilibrium opinion, similar to the classic FJ model, does not depend on the initial expressed opinions but rather on the stubbornness matrix $\textbf{K}=\diag(k_1,k_2,\cdots,k_n)$ and the innate opinion vector \( s = [s_1, \cdots, s_n]^T \).

\begin{lemma} \label{equlibrium}
The iterative averaging process defined by the update rule in Equation \eqref{FJ} converges to the Nash equilibrium \( z^* = \big( \mathbf{L} + \mathbf{K} \big)^{-1} \mathbf{K} s \).
\end{lemma}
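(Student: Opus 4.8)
The plan is to recast the scalar update rule in Equation \eqref{FJ} as an affine matrix iteration and then show that it is a contraction. First I would observe that the denominator $k_i+\sum_{j\in N(v_i)}w_{v_iv_j}$ equals $k_i+D_{ii}$, so Equation \eqref{FJ} reads in vector form as
\begin{equation*}
z^{(t)}=(\mathbf{K}+\mathbf{D})^{-1}\bigl(\mathbf{K}s+\mathbf{A}z^{(t-1)}\bigr)=\mathbf{M}z^{(t-1)}+c,
\end{equation*}
where $\mathbf{M}=(\mathbf{K}+\mathbf{D})^{-1}\mathbf{A}$ and $c=(\mathbf{K}+\mathbf{D})^{-1}\mathbf{K}s$. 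Since $\mathbf{K}\succ 0$, the diagonal matrix $\mathbf{K}+\mathbf{D}$ has strictly positive entries and is invertible, so $\mathbf{M}$ and $c$ are well defined.

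The key step is to bound the spectral radius of $\mathbf{M}$. Because $\mathbf{M}$ is entrywise nonnegative, I would compute its maximum row sum: the $i$-th row sums to $\tfrac{1}{k_i+D_{ii}}\sum_{j}A_{ij}=\tfrac{D_{ii}}{k_i+D_{ii}}$, which is strictly less than $1$ for every $i$ precisely because $k_i>0$. Hence $\|\mathbf{M}\|_{\infty}=\max_i \tfrac{D_{ii}}{k_i+D_{ii}}<1$, and since $\rho(\mathbf{M})\le\|\mathbf{M}\|_{\infty}$, the spectral radius lies strictly below one. Standard results on affine iterations of the form $z^{(t)}=\mathbf{M}z^{(t-1)}+c$ then guarantee that $z^{(t)}$ converges geometrically to a unique fixed point, independently of the initial vector $z^{(0)}$; this is the statement that the limit does not depend on the initial expressed opinions.

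Finally, I would identify the limit by passing to the fixed-point equation $z^{*}=\mathbf{M}z^{*}+c$. Multiplying through by $\mathbf{K}+\mathbf{D}$ yields $(\mathbf{K}+\mathbf{D})z^{*}=\mathbf{A}z^{*}+\mathbf{K}s$, and substituting $\mathbf{L}=\mathbf{D}-\mathbf{A}$ rearranges this to $(\mathbf{L}+\mathbf{K})z^{*}=\mathbf{K}s$. Because $\mathbf{L}\succeq 0$ and $\mathbf{K}\succ 0$, the matrix $\mathbf{L}+\mathbf{K}$ is positive definite and therefore invertible, giving $z^{*}=(\mathbf{L}+\mathbf{K})^{-1}\mathbf{K}s$, exactly the claimed equilibrium.

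I expect the main obstacle to be the spectral-radius bound: the crux of the argument is recognizing that the strict positivity of the stubbornness coefficients $k_i$ is what forces each row sum of $\mathbf{M}$ to be strictly less than one, which is what converts a merely stochastic-looking averaging operator into a genuine contraction. Everything else—the matrix reformulation and the algebraic identification of the limit—is routine once this contraction property is established.
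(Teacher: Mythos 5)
Your proof is correct and follows essentially the same route as the paper's: rewrite the update as the affine iteration $z^{(t)}=(\mathbf{K}+\mathbf{D})^{-1}\bigl(\mathbf{K}s+\mathbf{A}z^{(t-1)}\bigr)$, establish a contraction in the $\ell^\infty$ norm, and identify the unique fixed point using $\mathbf{L}=\mathbf{D}-\mathbf{A}$ and the invertibility of $\mathbf{L}+\mathbf{K}$. If anything, your execution of the contraction step is cleaner than the paper's: bounding the row sums of $\mathbf{M}=(\mathbf{K}+\mathbf{D})^{-1}\mathbf{A}$ by $\max_i D_{ii}/(k_i+D_{ii})<1$ is exactly right, whereas the paper bounds the contraction factor by the ratio $\|\mathbf{A}\|_\infty/\|\mathbf{K}+\mathbf{D}\|_\infty$, which divides by the \emph{maximum} diagonal entry of $\mathbf{K}+\mathbf{D}$ where a rigorous argument needs the minimum (or your rowwise computation).
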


\begin{proof}
    To determine the equilibrium opinion, we follow the approach outlined in~\cite{bindel2015bad}. The updating rule in Equation~\eqref{FJ} can be reformulated as
    $z^{(t)}=(\textbf{K}+\textbf{D})^{-1}\big(\textbf{K}\mathbf{s}+\textbf{A}z^{(t-1)}\big)$
where $\textbf{K}=\diag(k_1,k_2,\cdots,k_n)$ denotes the stubbornness matrix.  If $z^{(t)}$ and $y^{(t)}$ are arbitrary vectors, then
\begin{equation*}
    \| z^{(t)}-y^{(t)}\|_{\infty} \leq \frac{\big\| \textbf{\textbf{A}}(z^{(t-1)}-y^{(t-1)})\big\|_{\infty}}{\| \textbf{K}+\textbf{D} \|_{\infty}}
\end{equation*}
\begin{equation*}
\quad \quad \quad \quad \quad \quad \quad \quad \quad \quad \leq \frac{\| \textbf{A} \|_{\infty}}{\| \textbf{K}+\textbf{D} \|_{\infty}} \big\| z^{(t-1)}-y^{(t-1)}\big\|_{\infty}, 
\end{equation*}
where by definition we have
\begin{equation*}
   \frac{\| \textbf{A} \|_{\infty}}{\| \textbf{K}+\textbf{D} \|_{\infty}} =  \frac{\max_{i} |\sum_{j} \textbf{A}_{ij}|}{\max_{i} |\sum_{j} (\textbf{K}_{ij}+\textbf{D}_{ij})|}
\end{equation*}
\begin{equation*}
\quad \quad \quad \quad \quad \quad \quad \quad \quad \quad  =\frac{\max_{i} \sum_{j \in N(v_i)} w_{v_iv_j}}{\max_{i} k_i+\max_{i} \sum_{j \in N(v_i)} w_{v_iv_j}} <1.
\end{equation*}
This means that $z^{(t)}$ is a contraction mapping in the max norm. Hence, the iteration update rule by FJ model converges to a unique fixed point, which can be obtained by taking the limit on both sides of the iterative formula and utilizing the definition of the Laplacian matrix as 
    $z^*=\big(\textbf{L}+\textbf{K}\big)^{-1} \textbf{K}\mathbf{s}$.
\end{proof}

\subsubsection{Polarization and Disagreement.} 
To define polarization and disagreement, we need to define the mean-centered equilibrium vector \(\bar{z}\) as
\begin{equation}
    \bar{z}:=z^*-\frac{{z^*}^T \Vec{\textbf{1}}} {{n}} \Vec{\textbf{1}}, 
\end{equation}
where $\Vec{\textbf{1}}$ is the column vector of size $n$ and all entries are one. 

\begin{definition}[Disagreement]
Disagreement is defined as 
\begin{equation*}
    \mathcal{D}_{G,z^*}:=\sum_{\{u,v\} \in E} w_{uv} (z^*_u-z^*_v)^2= {z^*}^T \textbf{L} z^*. 
\end{equation*}
\end{definition}
As $(z_u^*-z_v^*)^2=(z_u^*-\frac{{z^*}^T \Vec{\textbf{1}}}{n}-z_v^*+\frac{{z^*}^T \Vec{\textbf{1}}}{n})^2=(\bar{z}_u-\bar{z}_v)^2$, and $\mathcal{D}_{G,z^*}={z^*}^T \textbf{L} z^*$, so $\mathcal{D}_{G,z^*}=\bar{z}^T \textbf{L} \bar{z}$, cf.~\cite{musco2018minimizing}. 

\begin{definition}[Polarization]
Polarization is quantified by the variance of opinions from the average and given by
\begin{equation*}
    \mathcal{P}_{G,z^*}:=\sum_{u \in V} (z_u^*-\frac{(z^*)^T \Vec{\textbf{1}}}{n} \Vec{\textbf{1}})^2=\bar{\textbf{z}}^T \bar{\textbf{z}}.
\end{equation*}
\end{definition}
Intuitively, polarization measures how opinions at equilibrium deviate from the average.

It is worth mentioning that both indices can be defined for any expressed opinion vector $z$, but we are interested in the equilibrium expressed opinion $z^*$.

\begin{definition}[Polarization-Disagreement]
The Polarization- Disagreement 
($\mathcal{PD}$) index is defined simply as the sum of polarization and disagreement:
\begin{equation*}
\mathcal{PD}_{G,z^*}:=\mathcal{P}_{G,z^*}+\mathcal{D}_{G,z^*}. 
\end{equation*}
\end{definition}

$\mathcal{PD}$ index received special attention by prior work since it has been observed that polarization and disagreement usually don't go hand in hand, and reducing one causes increasing the other. Thus, one ideally aims to reduce their sum, cf.~\cite{musco2018minimizing}. As an example, within social networks, connections between users who share similar mindsets are preferred over those between individuals with differing viewpoints. Thus, recommended links aim to decrease the disagreement experienced by users. However, this effort to minimize disagreement can result in increased polarization.

\textbf{Note.} In this study, we adhere to more commonly used definitions of $\mathcal{P}$, $\mathcal{D}$, and $\mathcal{PD}$, cf.~\cite{musco2018minimizing}. However, Xu et al.~\cite{xu2022effects} proposed an alternative definition. Our proof techniques can be adjusted to generate analogous results for this version of definitions. More details are provided in the Appendix \ref{Xue_result_appendix}.

\vspace{-0.1cm}
\subsection{Our contribution}

\textit{When does increasing the stubbornness of a node causes an increase or decrease in $\mathcal{PD}$ index?} Our main goal is to address this question.

As a preliminary step, we start by providing some matrix formulations for $\mathcal{PD}$ index and upper-bounds in terms of the stubbornness matrix and eigenvalues of a matrix, depending on graph structure and stubbornness values. 

Our first main result is in the homogeneous setup, where all nodes have the same stubbornness $\alpha$. We prove that increasing stubbornness $\alpha$ causes an increase in $\mathcal{PD}$. As a side product, we also provide similar results for polarization index and give an upper bound on the increase in polarization in terms of the stubbornness values, pre- and post-increase.

To gain a deeper understanding of how nodes' stubbornness impacts the opinion formation in social networks, we examine the stochastic block model (a standard generative model for social networks). We derive a closed-form formula for the $\mathcal{PD}$ in a simplified setup. We then demonstrate that when the number of nodes in the graph is sufficiently large, the $\mathcal{PD}$ index is \textit{quadratically} affected by the value of stubbornness, in the homogeneous setup.

The observed positive correlation between network's stubbornness and $\mathcal{PD}$ index is somewhat intuitive. Individuals with high levels of stubbornness are less likely to reconcile with those holding opposing viewpoints, causing an increase in the $\mathcal{PD}$. Conversely, more open-minded individuals are more likely to adjust their opinion to be aligned with others, reducing the $\mathcal{PD}$.

The above results raise the question: \textit{Can an increase in a node's stubbornness result in a reduction in the $\mathcal{PD}$?} To answer this question, we study the more general inhomogeneous setup, where nodes can have different values of stubbornness. We answer this question affirmatively by proving that if we increase the stubbornness of a \textit{neutral} node (a node whose opinion is equal to the average opinion $0$), the $\mathcal{PD}$ index decreases. We, in fact, characterize the exact change in the $\mathcal{PD}$.

In addition to our theoretical findings, we also conduct a set of experiments on both real-world and synthetic graph data. We first observe, as one might expect, that in most randomly generated setups, increasing the stubbornness of nodes increases the $\mathcal{PD}$. However, if we increase the stubbornness of a neutral node, the $\mathcal{PD}$ decreases, in line with our theoretical findings. We also observe that, interestingly, the degree of a node doesn't seem to play a role on the magnitude of the change (increase/decrease) in the $\mathcal{PD}$.

\textit{Are neutral nodes the only ones whose increase in stubbornness can reduce the $\mathcal{PD}$?} We observe that, surprisingly, sometimes making nodes with extreme opinions more stubborn could help reduce the $\mathcal{PD}$. This, in particular, happens when we have a well-connected community of nodes who are biased towards an opinion (mimicking a social bubble in real-world examples), and then we make a node with an opposing opinion in that bubble more stubborn.

\subsection{Related Work}

\textbf{Opinion Formation Dynamics.}  
Several opinion formation models have been introduced due to their various applications, such as understanding political voting dynamics~\cite{acemoglu2011opinion,zehmakan2019spread} and viral marketing strategies~\cite{zehmakan2024viral,zehmakan2021majority}.
DeGroot~\cite{degroot1974reaching} pioneered an opinion dynamics model in his seminal work on consensus formation, which suggested that individual opinions evolve through averaging opinions within their social network neighborhood. Friedkin and Johnsen~\cite{friedkin1990social} expanded on the DeGroot model to incorporate both disagreement and consensus, integrating each individual's initial beliefs into the averaging process. Our work is in the framework of this model.

Some other well-studied opinion dynamics include majority-based opinion formation models~\cite{zehmakan2023random,gartner2018majority}, voter models~\cite{das2014modeling}, the Galam model~\cite{gartner2020threshold,gartner2020threshold}, and bounded confidence models~\cite{deffuant2000mixing}. In a comprehensive survey of opinion diffusion models, Noorazar~\cite{noorazar2020recent} provides a detailed examination of these and other models.

\textbf{Polarization and Disagreement.} 
There has been substantial recent work on employing opinion dynamics models to study polarization and disagreement in social networks, cf.~\cite{rastegarpanah2019fighting,haddadan2021repbublik}. Here, we delve into research centered on the FJ model. Musco et al.~\cite{musco2018minimizing} were pioneers in defining the $\mathcal{PD}$ index within this model, establishing its convexity and demonstrating its solvability in polynomial time when minimizing the index with a specified budget over innate opinion. Zhu et al.~\cite{zhu2021minimizing} further extended this work by proving that adding an edge in a given graph decreases the $\mathcal{PD}$. They also developed a cubic-time greedy algorithm for recommending a specified number of new links to minimize $\mathcal{PD}$ in social networks. In a similar theoretical framework, Wang et al.~\cite{wang2024relationship} explored identifying social links that most effectively reduce $\mathcal{PD}$ using diverse graph machine learning algorithms. Learning-based algorithms are presented in \cite{sun2023self, sun2023all} from an empirical analysis perspective.

Rácz et al.~\cite{racz2023towards} introduced several novel proof techniques showing that the polarization objective function is non-convex, challenging the suggestion that adding edges necessarily reduces polarization. They tackled optimization problems aimed at minimizing or maximizing polarization by adjusting edge weights within predefined constraints. Chitra and Musco~\cite{chitra2020analyzing} investigated the filter bubble effect, revealing that attempts by network administrators to mitigate disagreement and enhance user engagement inadvertently escalate polarization. Furthermore, Bhalla et al.~\cite{bhalla2023local} demonstrated in a model incorporating confirmation bias and friend-of-friend link recommendations that opinion dynamics naturally lead to polarization. For more related work on efficiently approximating polarization and disagreement, refer to~\cite{neumann2024sublinear,xu2022effects}

\textbf{Stubbornness.} The impact of stubbornness of users on the opinion dynamic process has been studied in various contexts, cf.~\cite{das2014modeling,hunter2022optimizing,n2020rumor}. For example, the effectiveness of stubborn users in resisting engineered dominant outputs in the majority model has been demonstrated by Out et al.~\cite{out2021majority}. The problem of adjusting stubbornness (in terms of susceptibility) of nodes to either maximize or minimize the total sum of opinions at equilibrium in DeGroot opinion dynamics has been extensively investigated in studies by Abebe et al.~\cite{abebe2018opinion,abebe2021opinion}. The closest work to ours is the one by Xu et al.~\cite{xu2022effects}, who analyzed the incorporation of stubbornness into the FJ opinion dynamics. They demonstrated that stubbornness significantly influences convergence time and the expressed opinion, characterized by stubbornness-dependent spanning diverging forests.

\section{Matrix Formulation and Bounds} \label{MAPD_section}

We express the polarization, disagreement, and the $\mathcal{PD}$ indices in matrix forms, in the presence of stubbornness factors. This comes very handy in the rest of the paper. Additionally, we provide upper bounds on $\mathcal{PD}$ in both the homogeneous and inhomogeneous cases in Theorems~\ref{bound_homogeneous} and \ref{bound_inhomogeneous}, respectively.

For the sake of simplicity, we define $\Vec{\textbf{1}}_\textbf{K}=\textbf{K} (\textbf{L}+\textbf{K})^{-1}  \Vec{\textbf{1}}$. 
Let's start with the following key lemma. 
\begin{lemma}\label{sbar}
The mean-centered of $z^*$ is given by $\bar{z}=(\textbf{L}+\textbf{K})^{-1} \textbf{K}\bar{s}_{\scriptscriptstyle\textbf{K}}$ where
\begin{equation}\label{s_bar_k}
\bar{s}_{\scriptscriptstyle\textbf{K}}:=s-\frac{s^T \Vec{\textbf{1}}_\textbf{K}} {{n}} \Vec{\textbf{1}}. 
\end{equation}
Moreover, if $\textbf{K}=\textbf{I}$, then 
\begin{equation}\label{s_bar_I}
\bar{s}:=s-\frac{s^T \Vec{\textbf{1}}} {{n}} \Vec{\textbf{1}}. 
\end{equation}
\end{lemma}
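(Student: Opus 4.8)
The plan is to start from the definitions and simply compute. Recall from Lemma~\ref{equlibrium} that $z^* = (\textbf{L}+\textbf{K})^{-1}\textbf{K}s$, and by definition of the mean-centered vector, $\bar{z} = z^* - \frac{{z^*}^T \Vec{\textbf{1}}}{n}\Vec{\textbf{1}}$. The idea is to substitute the closed form of $z^*$ into this expression and show that the result can be rewritten as $(\textbf{L}+\textbf{K})^{-1}\textbf{K}\bar{s}_{\scriptscriptstyle\textbf{K}}$, where $\bar{s}_{\scriptscriptstyle\textbf{K}}$ is the modified centering of $s$ given in~\eqref{s_bar_k}.

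First I would expand $\bar{z} = (\textbf{L}+\textbf{K})^{-1}\textbf{K}s - \frac{1}{n}\big({z^*}^T\Vec{\textbf{1}}\big)\Vec{\textbf{1}}$ and apply $(\textbf{L}+\textbf{K})^{-1}\textbf{K}$ to the proposed target $\bar{s}_{\scriptscriptstyle\textbf{K}} = s - \frac{s^T\Vec{\textbf{1}}_\textbf{K}}{n}\Vec{\textbf{1}}$, obtaining
\begin{equation*}
(\textbf{L}+\textbf{K})^{-1}\textbf{K}\bar{s}_{\scriptscriptstyle\textbf{K}} = (\textbf{L}+\textbf{K})^{-1}\textbf{K}s - \frac{s^T\Vec{\textbf{1}}_\textbf{K}}{n}(\textbf{L}+\textbf{K})^{-1}\textbf{K}\Vec{\textbf{1}}.
\end{equation*}
Comparing the two expressions, it suffices to verify that the two correction terms agree, i.e. that $\big({z^*}^T\Vec{\textbf{1}}\big)\Vec{\textbf{1}} = \big(s^T\Vec{\textbf{1}}_\textbf{K}\big)(\textbf{L}+\textbf{K})^{-1}\textbf{K}\Vec{\textbf{1}}$. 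The key computational step is to recognize, using the symmetry of $(\textbf{L}+\textbf{K})^{-1}\textbf{K}$ in the appropriate sense, that ${z^*}^T\Vec{\textbf{1}} = s^T\textbf{K}(\textbf{L}+\textbf{K})^{-1}\Vec{\textbf{1}} = s^T\Vec{\textbf{1}}_\textbf{K}$ by the very definition $\Vec{\textbf{1}}_\textbf{K} = \textbf{K}(\textbf{L}+\textbf{K})^{-1}\Vec{\textbf{1}}$, so the scalar coefficients match. The main obstacle, and the step I would treat most carefully, is the matrix manipulation showing that $(\textbf{L}+\textbf{K})^{-1}\textbf{K}\Vec{\textbf{1}} = \Vec{\textbf{1}}$; this would follow from $\textbf{L}\Vec{\textbf{1}} = 0$ (since $\Vec{\textbf{1}}$ lies in the kernel of the Laplacian), giving $(\textbf{L}+\textbf{K})\Vec{\textbf{1}} = \textbf{K}\Vec{\textbf{1}}$ and hence $(\textbf{L}+\textbf{K})^{-1}\textbf{K}\Vec{\textbf{1}} = \Vec{\textbf{1}}$, which collapses the correction term to $\frac{s^T\Vec{\textbf{1}}_\textbf{K}}{n}\Vec{\textbf{1}}$ and closes the identity.

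Finally, for the special case $\textbf{K} = \textbf{I}$, I would observe that $\Vec{\textbf{1}}_\textbf{I} = \textbf{I}(\textbf{L}+\textbf{I})^{-1}\Vec{\textbf{1}}$, but more directly that the centering recovers the classical $\bar{s} = s - \frac{s^T\Vec{\textbf{1}}}{n}\Vec{\textbf{1}}$; this reduces to checking that $\Vec{\textbf{1}}_\textbf{K}$ specializes so that $s^T\Vec{\textbf{1}}_\textbf{K}$ coincides with the usual $s^T\Vec{\textbf{1}}$ up to the relevant normalization, which again uses $\textbf{L}\Vec{\textbf{1}} = 0$. The bulk of the work is bookkeeping with the scalar projection coefficients; the only genuinely structural input is the kernel property of the Laplacian.
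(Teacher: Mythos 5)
Your proposal is correct and follows essentially the same route as the paper's own proof: both hinge on the identity $(\textbf{L}+\textbf{K})^{-1}\textbf{K}\Vec{\textbf{1}}=\Vec{\textbf{1}}$ (from $\textbf{L}\Vec{\textbf{1}}=\Vec{\textbf{0}}$ and invertibility of $\textbf{L}+\textbf{K}$) together with the observation ${z^*}^T\Vec{\textbf{1}}=s^T\textbf{K}(\textbf{L}+\textbf{K})^{-1}\Vec{\textbf{1}}=s^T\Vec{\textbf{1}}_{\textbf{K}}$, and the $\textbf{K}=\textbf{I}$ case via $(\textbf{L}+\textbf{I})^{-1}\Vec{\textbf{1}}=\Vec{\textbf{1}}$. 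The only difference is organizational (you expand both sides and match the correction terms, while the paper substitutes and factors out $(\textbf{L}+\textbf{K})^{-1}\textbf{K}$), which is the same algebra.
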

\begin{proof}
Using $\textbf{L} \Vec{\textbf{1}}=\Vec{\textbf{0}}$, where $\Vec{\textbf{0}}$ is the vector of size $n$ and all entries are zero, we will have $(\textbf{L}+\textbf{K})\Vec{\textbf{1}}=\textbf{K}\Vec{\textbf{1}}$. Since $\textbf{L}+\textbf{K}$ is strictly positive definite matrix, it is invertible, which implies $\Vec{\textbf{1}}=(\textbf{L}+\textbf{K})^{-1} \textbf{K} \Vec{\textbf{1}}$. 
Following the definition of $\bar{z}$ and Lemma~\ref{equlibrium}, we will have 
\begin{align*}
\bar{z} &= z^* - \frac{{z^*}^T \Vec{\textbf{1}}}{n} \Vec{\textbf{1}} = (\textbf{L} + \textbf{K})^{-1} \textbf{K} \mathbf{s} - \frac{s^T \textbf{K} (\textbf{L} + \textbf{K})^{-1} \Vec{\textbf{1}}}{n} \Vec{\textbf{1}}, \\
&= (\textbf{L} + \textbf{K})^{-1} \textbf{K} s - (\textbf{L} + \textbf{K})^{-1} \textbf{K} \Vec{\textbf{1}} \frac{s^T \textbf{K} (\textbf{L} + \textbf{K})^{-1} \Vec{\textbf{1}}}{n}, \\
&= (\textbf{L} + \textbf{K})^{-1} \textbf{K} \bigg(s - \frac{s^T \Vec{\textbf{1}}_{\textbf{K}}}{n} \Vec{\textbf{1}}\bigg) = (\textbf{L} + \textbf{K})^{-1} \textbf{K} \bar{s}_{\scriptscriptstyle\textbf{K}}.
\end{align*}
The second part of the lemma will be proven by using the fact that $(\textbf{L}+\textbf{I})^{-1}\Vec{\textbf{1}}=\Vec{\textbf{1}}$.
\end{proof}

Now, we simplify the quantities of polarization, disagreement and the $\mathcal{PD}$. Following Lemma~\ref{sbar}, $\bar{z}=(\textbf{L}+\textbf{K})^{-1} \textbf{K}\bar{s}_{\scriptscriptstyle\textbf{K}}$, so the polarization index can be given as 



\begin{equation}\label{polarization}
\mathcal{P}:=\mathcal{P}_{G,z^*}=\bar{s}_{\scriptscriptstyle\textbf{K}}^T \textbf{K} (\textbf{L}+\textbf{K})^{-2} \textbf{K} \bar{s}_{\scriptscriptstyle\textbf{K}}.
\end{equation}
Also, the disagreement index can be computed as 
\begin{equation}\label{disagreement}
\mathcal{D}:=\mathcal{D}_{G,z^*}=\bar{s}_{\scriptscriptstyle\textbf{K}}^T \textbf{K} (\textbf{L}+\textbf{K})^{-1} \textbf{L} (\textbf{L}+\textbf{K})^{-1} \textbf{K} \bar{s}_{\scriptscriptstyle\textbf{K}}.
\end{equation}
Hence the $\mathcal{PD}$ of graph $G$ with Laplacian matrix $\textbf{L}$ and stubbornness matrix $\textbf{K}$ , innate opinions $s$  can be simplified to
\begin{equation}\label{PD_first}
     \mathcal{PD}:=\mathcal{P}_{G,z^*}+ \mathcal{D}_{G,z^*}=\bar{s}_{\scriptscriptstyle\textbf{K}}^T \textbf{K} (\textbf{K}+\textbf{L})^{-1} (\textbf{L}+\textbf{I}) (\textbf{K}+\textbf{L})^{-1} \textbf{K} \bar{s}_{\scriptscriptstyle\textbf{K}}.
\end{equation}

For the sake of simplicity, we will use the notation $\textbf{P}:=(\textbf{L}+\textbf{I})^{1/2} (\textbf{K}+\textbf{L})^{-1} \textbf{K}$ throughout the paper. Note that 
%
the matrix $\textbf{K} (\textbf{K}+\textbf{L})^{-1} (\textbf{L}+\textbf{I}) (\textbf{K}+\textbf{L})^{-1} \textbf{K}$ 
can be re-written as 
\begin{equation*}
\textbf{K} (\textbf{K}+\textbf{L})^{-1} (\textbf{L}+\textbf{I})^{1/2} (\textbf{L}+\textbf{I})^{1/2} (\textbf{K}+\textbf{L})^{-1} \textbf{K}  =\textbf{P}^T \textbf{P}. 
\end{equation*}
Hence, it is a symmetric and positive semi-definite matrix as for any $ x \in \mathbb{R}^n$, 
    $x^T \textbf{P}^T \textbf{P} x^T=(\textbf{P} x)^T(\textbf{P} x)=\big\|\textbf{P} x \big\|_2^2 \geq 0.$

In the following theorem, we will provide an upper bound (in the homogeneous case) for the $\mathcal{PD}$ based on the maximum length of the innate opinions and the stubbornness factor.

\begin{theorem}\label{bound_homogeneous}
For any graph \( G \), stubbornness matrix \( \textbf{K} = \alpha \textbf{I} \), $\alpha>2$, and any innate opinion vector $ s \in \mathbb{R}^n$ where $\| s \| \leq R$, the $\mathcal{PD}$ is upper bounded by \( \frac{R^2 \alpha^2}{4(\alpha - 1)} \). If \( 0 < \alpha < 2 \), then the upper bound is given by \( R^2 \).
\end{theorem}

\begin{proof}
If we set $\textbf{K}=\alpha \textbf{I}$, then
\begin{equation*}
\bar{s}_{\scriptscriptstyle\textbf{K}}=s-\frac{s^T \textbf{K} (\textbf{L}+\textbf{K})^{-1} \Vec{\textbf{1}}}{n} \Vec{\textbf{1}}=s-\frac{s^T \Vec{\textbf{1}}}{n} \Vec{\textbf{1}}=\bar{s},
\end{equation*}
as $\textbf{K}(\textbf{L}+\textbf{K})^{-1} \Vec{\textbf{1}}=(\textbf{I}+\alpha^{-1}\textbf{L})^{-1} \Vec{\textbf{1}}=\Vec{\textbf{1}}$. So, the $\mathcal{PD}$ will be simplified as  
$$\mathcal{PD}=\bar{s}^T (\textbf{I}+\alpha^{-1} \textbf{L})^{-1} (\textbf{I}+ \textbf{L})(\textbf{I}+\alpha^{-1} \textbf{L})^{-1} \bar{s},$$
as $(\textbf{L}+\textbf{K})^{-1}=\alpha^{-1}(\textbf{I}+\alpha ^{-1}\textbf{L})^{-1}$.
Since $\textbf{L}$ is a symmetric positive semi-definite, 
\begin{equation*}    
F(\textbf{L},\alpha):=(\textbf{I}+\alpha^{-1}\textbf{L})^{-1} (\textbf{I}+\textbf{L}) (\textbf{I}+\alpha^{-1}\textbf{L})^{-1}=\sum_{j=1}^n \frac{1+\lambda_j}{(1+\alpha^{-1}\lambda_j)^2} q_j q_j^T.  
\end{equation*}
As any internal opinion $s \in \mathbb{R}^n$, with $ \| s \| \leq R$, can be written as $\sum_{j=1}^n \gamma_j q_j$, for some coefficients $\gamma_j$, we will have 
\begin{equation*}
\bar{s}^T F(\textbf{L},\alpha) \bar{s}=\sum_{j=2}^n \frac{1+\lambda_j}{(1+\alpha^{-1}\lambda_j)^2} \gamma_j^2 \leq \max_{j=2,\cdots,n} \bigg( \frac{1+\lambda_j}{(1+\alpha^{-1}\lambda_j)^2} \bigg) R^2. 
\end{equation*}
Note that the sum starts from index $2$ and the reason is that $\Vec{\textbf{1}}$ is the eigenvector of Laplacian matrix and also $\langle s, \Vec{\textbf{1}} \rangle =0$. 
Now, consider $g(x)$ as $g(x):=\frac{1+x}{(1+\alpha^{-1}x)^2}$. It is easy to compute that 
$g'(x):=\frac{\alpha-2 -x}{\alpha (1+\alpha^{-1}x)^3}$, 
which increases if and only if $x <\alpha-2$ and decreases if and only if $x> \alpha-2$, attaining a pick of $\frac{\alpha^2}{4(\alpha-1)}$ at $x=\alpha-2$ which immediately gives the upper bound. If \( 0 < \alpha <2 \), then as \( x = \alpha - 2 < 0 \), \( g(x) \) achieves its maximum \( R^2 \) at \( x = 0 \).
\end{proof}

In the following theorem, we will provide an upper bound for the $\mathcal{PD}$ in the inhomogeneous case.

\begin{theorem}\label{bound_inhomogeneous}
For any graph $G$ with given stubbornness matrix $\textbf{K}$ and any innate opinion vector $ s \in \mathbb{R}^n$ where $\| s \| \leq R$, the $\mathcal{PD}$ is upper bounded by
\begin{equation*}
\big(R^2+ \mu^2 n \big) \lambda_{\max}(\textbf{P}^T\textbf{P}), 
\end{equation*}
where $\mu:=\max_{s \in \mathbb{R}^n} \frac{s^T ( \Vec{\textbf{1}}-\Vec{\textbf{1}}_{\textbf{K}})}{n}$.
\end{theorem}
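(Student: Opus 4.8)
The plan is to reduce the stated bound to two independent estimates: a Rayleigh-quotient bound on $\textbf{P}^T\textbf{P}$ and a norm estimate on the centered vector $\bar{s}_{\scriptscriptstyle\textbf{K}}$. Recall from the matrix formulation established just above the theorem that $\mathcal{PD} = \bar{s}_{\scriptscriptstyle\textbf{K}}^T \textbf{P}^T\textbf{P}\, \bar{s}_{\scriptscriptstyle\textbf{K}}$, and that $\textbf{P}^T\textbf{P}$ is symmetric positive semi-definite. Invoking the Courant-Fischer characterization of the largest eigenvalue recalled in the preliminaries, I would first write
\begin{equation*}
\mathcal{PD} = \bar{s}_{\scriptscriptstyle\textbf{K}}^T \textbf{P}^T\textbf{P}\, \bar{s}_{\scriptscriptstyle\textbf{K}} \leq \lambda_{\max}(\textbf{P}^T\textbf{P})\, \|\bar{s}_{\scriptscriptstyle\textbf{K}}\|^2,
\end{equation*}
so that it suffices to prove $\|\bar{s}_{\scriptscriptstyle\textbf{K}}\|^2 \leq R^2 + \mu^2 n$.

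For the norm estimate, I would compare $\bar{s}_{\scriptscriptstyle\textbf{K}}$ with the ordinary mean-centered vector $\bar{s}$ from Lemma~\ref{sbar}. Subtracting the two definitions yields the clean decomposition
\begin{equation*}
\bar{s}_{\scriptscriptstyle\textbf{K}} = \bar{s} + c\,\Vec{\textbf{1}}, \qquad c := \frac{s^T(\Vec{\textbf{1}} - \Vec{\textbf{1}}_\textbf{K})}{n}.
\end{equation*}
The structural fact that drives the argument is that $\bar{s}$ is orthogonal to $\Vec{\textbf{1}}$, since $\bar{s}^T\Vec{\textbf{1}} = s^T\Vec{\textbf{1}} - \frac{s^T\Vec{\textbf{1}}}{n}\|\Vec{\textbf{1}}\|^2 = 0$. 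Using $\|\Vec{\textbf{1}}\|^2 = n$ and the Pythagorean identity, this gives
\begin{equation*}
\|\bar{s}_{\scriptscriptstyle\textbf{K}}\|^2 = \|\bar{s}\|^2 + c^2 n.
\end{equation*}

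It then remains to bound the two summands. For the first, $\bar{s} = (\textbf{I} - \tfrac{1}{n}\Vec{\textbf{1}}\Vec{\textbf{1}}^T)s$ is the orthogonal projection of $s$ onto the complement of $\Vec{\textbf{1}}$, hence a contraction, so $\|\bar{s}\|^2 \leq \|s\|^2 \leq R^2$. For the second, $c$ is precisely the quantity maximized in the definition of $\mu$ (over the feasible set $\|s\| \leq R$). The one genuinely delicate point I anticipate is the sign handling here: since the definition of $\mu$ is a \emph{one-sided} maximum, a bare inequality $c \leq \mu$ is not enough to control $c^2$. I would instead invoke the symmetry of the feasible set under $s \mapsto -s$, which forces both $c \leq \mu$ and $-c \leq \mu$, and therefore $c^2 \leq \mu^2$. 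Combining $\|\bar{s}\|^2 \leq R^2$ with $c^2 n \leq \mu^2 n$ gives $\|\bar{s}_{\scriptscriptstyle\textbf{K}}\|^2 \leq R^2 + \mu^2 n$, and substituting into the Rayleigh bound completes the proof. Everything apart from the symmetry observation is a routine combination of the spectral inequality and the orthogonal decomposition.
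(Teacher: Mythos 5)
Correct, and essentially the paper's own argument: both proofs bound the quadratic form by $\lambda_{\max}(\textbf{P}^T\textbf{P})\,\|\bar{s}_{\scriptscriptstyle\textbf{K}}\|^2$ (the paper via explicit eigenexpansion of $\textbf{P}^T\textbf{P}$, you via Courant--Fischer, which is the same thing), then use the decomposition $\bar{s}_{\scriptscriptstyle\textbf{K}}=\bar{s}+c\,\Vec{\textbf{1}}$ from Lemma~\ref{sbar} together with $\bar{s}\perp\Vec{\textbf{1}}$ to get $\|\bar{s}_{\scriptscriptstyle\textbf{K}}\|^2\le R^2+\mu^2 n$. Your symmetry observation giving $c^2\le\mu^2$ is in fact slightly more careful than the paper, which silently substitutes the one-sided maximum $\mu$ for the actual coefficient $c$ in the expansion without addressing the sign.
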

\begin{proof}
As $\textbf{P}^T \textbf{P}$ is symmetric positive semi-definite, it can be written as follows, cf.~\cite{golub2013matrix}:
\begin{equation*}
    \textbf{P}^T \textbf{P}=\sum_{j=1}^n \lambda_j p_j p_j^T, 
\end{equation*}
where $p_j$ is the eigenvector (of matrix $\textbf{P}^T \textbf{P}$) related to the eigenvalue $\lambda_j$. As $\{p_j\}_{j=1}^n$ forms an orthonormal basis for $\mathbb{R}^n$, any $\bar{s}_{\scriptscriptstyle\textbf{K}} \in \mathbb{R}^n$ can be written as $\bar{s}_{\scriptscriptstyle\textbf{K}}=\sum_{j=1}^n \alpha_j p_j$ where $\alpha_j \in \mathbb{R}$. Hence, as $\| \bar{s}_{\scriptscriptstyle\textbf{K}} \|^2=\sum_{j=1}^n \alpha_j^2$ we will have 
\begin{equation}\label{upper_bound_MAPD}
\bar{s}_{\scriptscriptstyle\textbf{K}}^T \textbf{P}^T \textbf{P} \bar{s}_{\scriptscriptstyle\textbf{K}}=\sum_{j=1}^n \alpha_j^2 \lambda_j \leq \lambda_{\max}(\textbf{P}^T \textbf{P}) R^*, \quad \forall\ \|s\| \leq R, 
\end{equation}
where $R^*= \max_{s \in \mathbb{R}^n}\| \bar{s}_{\scriptscriptstyle\textbf{K}} \|_2^2$. 
Following Lemma~\ref{sbar},   
$\bar{s}_{\scriptscriptstyle\textbf{K}}=\bar{s}+\frac{s^T \big( \Vec{\textbf{1}}-\Vec{\textbf{1}}_{\textbf{K}} \big)}{n} \Vec{\textbf{1}}$
and if we set $\mu:=\max_{s \in \mathbb{R}^n} \frac{s^T ( \Vec{\textbf{1}}-\Vec{\textbf{1}}_{\textbf{K}})}{n}$, we will have 
\begin{equation*}
\| \bar{s}_{\scriptscriptstyle\textbf{K}} \|_2^2=\langle \bar{s}, \bar{s} \rangle +2 \langle \bar{s}, \mu \Vec{\textbf{1}} \rangle +\langle \mu \Vec{\textbf{1}},\mu \Vec{\textbf{1}}  \rangle,
\end{equation*}
which gives us $R^*=R^2+\mu^2 n$ as $\langle \bar{s}, \Vec{\textbf{1}} \rangle=0$ and the proof is complete. 
\end{proof}



\section{$\mathcal{PD}$ in homogeneous stubbornness} \label{homogeneous_section}
In this section, we consider the homogeneous setup, where all nodes have the same stubbornness $\alpha>0$, that is, $\textbf{K} = \alpha \textbf{I}$. This facilitates a rigorous analysis of the impact of stubbornness. Our main result in this section is Theorem~\ref{P_PD_K_alpha_I_theorem} which demonstrates that any increase in the stubbornness coefficient of the network leads to an increase in $\mathcal{PD}$.  
For the sake of simplicity, let $\mathcal{PD}(\alpha)$ and $\mathcal{P}(\alpha)$ denote the $\mathcal{PD}$ and the polarization with stubbornness matrix $\textbf{K}=\alpha \textbf{I}$. 

\begin{theorem}\label{P_PD_K_alpha_I_theorem}
Consider a graph \( G = (V, E) \) and innate opinion vector $s \in  \mathbb{R}^n$. Then, for any  $\alpha \leq \beta$, 
\begin{equation*}
     \mathcal{PD}(\alpha) \leq  \mathcal{PD}(\beta), \quad \mathcal{P}(\alpha) \leq \mathcal{P}(\beta). 
\end{equation*}
\end{theorem}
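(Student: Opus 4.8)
The plan is to reduce both quantities to weighted sums over the Laplacian eigenbasis and then argue termwise monotonicity in $\alpha$. First I would exploit the homogeneous structure exactly as in the proof of Theorem~\ref{bound_homogeneous}: since $\textbf{K}=\alpha\textbf{I}$ we have $\bar{s}_{\scriptscriptstyle\textbf{K}}=\bar{s}$, \emph{independent of} $\alpha$, and $(\textbf{L}+\alpha\textbf{I})^{-1}=\alpha^{-1}(\textbf{I}+\alpha^{-1}\textbf{L})^{-1}$. Substituting into \eqref{polarization} and \eqref{PD_first} gives
\begin{equation*}
\mathcal{P}(\alpha)=\bar{s}^T(\textbf{I}+\alpha^{-1}\textbf{L})^{-2}\bar{s},\qquad \mathcal{PD}(\alpha)=\bar{s}^T(\textbf{I}+\alpha^{-1}\textbf{L})^{-1}(\textbf{I}+\textbf{L})(\textbf{I}+\alpha^{-1}\textbf{L})^{-1}\bar{s}.
\end{equation*}
The crucial point of this step is that the only $\alpha$-dependence now sits inside the symmetric matrices, while the fixed vector $\bar{s}$ carries no $\alpha$.

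Next I would diagonalize in the orthonormal Laplacian eigenbasis $\{q_j\}$. Writing $\bar{s}=\sum_{j}\gamma_j q_j$ and using $\langle\bar{s},\textbf{1}\rangle=0$ (where $\textbf{1}$ is the all-ones vector, so the $\lambda_1=0$ mode drops out), I obtain
\begin{equation*}
\mathcal{P}(\alpha)=\sum_{j=2}^n \frac{\gamma_j^2}{(1+\alpha^{-1}\lambda_j)^2},\qquad \mathcal{PD}(\alpha)=\sum_{j=2}^n \frac{(1+\lambda_j)\,\gamma_j^2}{(1+\alpha^{-1}\lambda_j)^2}.
\end{equation*}
Both are non-negatively weighted sums (the weights $\gamma_j^2$ and $(1+\lambda_j)\gamma_j^2$ are $\geq 0$ and do not depend on $\alpha$) of the scalar functions $\phi_j(\alpha):=(1+\alpha^{-1}\lambda_j)^{-2}$.

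The remaining step is to show each $\phi_j$ is non-decreasing in $\alpha$ on $(0,\infty)$. Rewriting $\phi_j(\alpha)=\alpha^2/(\alpha+\lambda_j)^2=(1+\lambda_j/\alpha)^{-2}$, one sees directly that for $\lambda_j\geq 0$ the quantity $\lambda_j/\alpha$ is non-increasing in $\alpha$, hence $1+\lambda_j/\alpha$ is non-increasing and its reciprocal square is non-decreasing; equivalently $\phi_j'(\alpha)=2\alpha\lambda_j/(\alpha+\lambda_j)^3\geq 0$. Since a non-negatively weighted sum of non-decreasing functions is non-decreasing, $\alpha\leq\beta$ yields $\mathcal{P}(\alpha)\leq\mathcal{P}(\beta)$ and $\mathcal{PD}(\alpha)\leq\mathcal{PD}(\beta)$, as claimed.

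I do not expect a genuine obstacle here; the real work is conceptual rather than technical. The one point that must be handled carefully is the first step: the clean eigenbasis decoupling hinges on $\bar{s}_{\scriptscriptstyle\textbf{K}}$ being $\alpha$-free, which is special to the homogeneous case $\textbf{K}=\alpha\textbf{I}$ (in the inhomogeneous setting $\textbf{K}$ and $\textbf{L}$ need not be simultaneously diagonalizable and $\bar{s}_{\scriptscriptstyle\textbf{K}}$ itself varies, so this simple argument breaks, which is precisely why the inhomogeneous analysis in Section~\ref{inhomogeneous_section} requires different tools). Granting that reduction, the monotonicity is just the elementary fact that $(1+\lambda_j/\alpha)^{-2}$ increases as $\alpha$ grows.
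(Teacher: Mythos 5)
Your proposal is correct and follows essentially the same route as the paper's proof: reduce to the homogeneous case where $\bar{s}_{\scriptscriptstyle\textbf{K}}=\bar{s}$, diagonalize in the Laplacian eigenbasis to obtain $\mathcal{P}(\alpha)=\sum_{j\geq 2}\gamma_j^2(1+\alpha^{-1}\lambda_j)^{-2}$ and $\mathcal{PD}(\alpha)=\sum_{j\geq 2}(1+\lambda_j)\gamma_j^2(1+\alpha^{-1}\lambda_j)^{-2}$, and conclude by termwise monotonicity of $(1+\alpha^{-1}\lambda_j)^{-2}$ in $\alpha$. The only cosmetic difference is that you verify the monotonicity by computing $\phi_j'(\alpha)=2\alpha\lambda_j/(\alpha+\lambda_j)^3\geq 0$, whereas the paper simply states the scalar inequality.
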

\begin{proof}
First, note that $(\textbf{L}+\textbf{K})^{-1}=(\textbf{L}+\alpha \textbf{I})^{-1}=\alpha^{-1} (\textbf{I}+\alpha^{-1}\textbf{L})^{-1}$. 
As $(\textbf{I}+\alpha^{-1} \textbf{L}) \Vec{\textbf{1}}=\Vec{\textbf{1}}$, we will have $\Vec{\textbf{1}}=(\textbf{I}+\alpha^{-1} \textbf{L})^{-1} \Vec{\textbf{1}}$, and hence 
\begin{equation*}
\bar{s}_{\scriptscriptstyle\textbf{K}}=s-\frac{s^T \textbf{K} (\textbf{L}+\textbf{K})^{-1} \Vec{\textbf{1}}}{n} \Vec{\textbf{1}}=s-\frac{s^T \Vec{\textbf{1}}}{n} \Vec{\textbf{1}}=\bar{s}.
\end{equation*}
Indeed, as the stubbornness is homogeneous,  polarization and the $\mathcal{PD}$ given in Equations~\eqref{polarization} and \eqref{PD_first} will take the following forms 
\begin{equation}\label{P_L_s_K}
    \mathcal{P}(\alpha)=\alpha^2 \bar{s}^T (\textbf{L}+\textbf{K})^{-2} \bar{s}= \bar{s}^T (\textbf{I}+\alpha^{-1} \textbf{L})^{-2} \bar{s}, 
\end{equation}
\begin{equation}\label{PD_L_s_K}
    \mathcal{PD}(\alpha)=\alpha^2 \bar{s}^T (\textbf{L}+\textbf{K})^{-1} (\textbf{I}+\textbf{L}) (\textbf{L}+\textbf{K})^{-1} \bar{s}, 
\end{equation}
\begin{equation*}
\quad \quad \quad \quad \quad = \bar{s}^T (\textbf{I}+\alpha^{-1}\textbf{L})^{-1} (\textbf{I}+\textbf{L}) (\textbf{I}+\alpha^{-1} \textbf{L})^{-1} \bar{s}. 
\end{equation*}
Also, note that for any $s \in \mathbb{R}^n$, $s$ can be written as $s=\sum_{j=1}^n \gamma_j q_j^T$, where $q_j$ and $\gamma_j$, for $j=1,\cdots,n$, are the eigenvectors of matrix $\textbf{L}$ and real coefficients, respectively. Hence, $\bar{s}$ will have the following expansion 
\begin{equation*}
    \bar{s}=s-\frac{s^T \Vec{\textbf{1}}}{n} \Vec{\textbf{1}}=\gamma_1 \Vec{\textbf{1}}+\sum_{j=2}^n \gamma_j q_j-\frac{\bigg(\gamma_1 \Vec{\textbf{1}}^T+\sum_{j=2}^n \gamma_j q_j^T \bigg) \Vec{\textbf{1}}}{n} \Vec{\textbf{1}}, 
\end{equation*}
which gives  $\bar{s}=\sum_{j=2}^n \gamma_j q_j$ as $\Vec{\textbf{1}}^T \Vec{\textbf{1}}=n$ and $q_i^T q_j=1$ if  $i=j$ and $0$ otherwise. 

Using Equations~\eqref{f(L)} and \eqref{multiplication_series}, Equations \eqref{P_L_s_K} and \eqref{PD_L_s_K} will have the following representations
\begin{equation*}
\mathcal{P}(\alpha)=\sum_{j=2}^n \frac{1}{(1+\alpha^{-1} \lambda_j)^2}\gamma_j^2, \quad 
 \mathcal{PD}(\alpha)=\sum_{j=2}^n \frac{1+\lambda_j}{(1+\alpha^{-1} \lambda_j)^2} \gamma_j^2.
\end{equation*}
The proof then will follow as for any $\beta \geq \alpha$, $\frac{1}{(1+\beta^{-1}x)^2} \geq \frac{1}{(1+\alpha^{-1}x)^2}$ for any $x \geq 0$.
\end{proof}

While our main focus in this paper is on the $\mathcal{PD}$, we state the following interesting result about the connection between stubbornness and polarization, which follows from a novel proof idea. The upper bound depends only on the stubbornness factors (pre- and post-increase), but not the graph structure.

\begin{theorem}\label{p_changes_upper_bound}
Consider a graph \( G = (V, E) \) and a given innate opinion vector $s \in \mathbb{R}^n$ such that $\|s\| \leq R$, then
\begin{equation*}
\mathcal{P}(\beta)-\mathcal{P}(\alpha)  \leq \bigg(\frac{1}{\left(1+\beta^{-1} C\right)^2} - \frac{1}{\left(1+\alpha^{-1} C\right)^2}\bigg) R^2,
\end{equation*}
where 
\begin{equation}\label{C}
C:
=\frac{\alpha^{1/3}-\beta^{1/3}}{\beta^{-2/3}-\alpha^{-2/3}}. 
\end{equation}
\end{theorem}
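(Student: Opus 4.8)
The plan is to reduce the statement to a one-variable optimization problem via the spectral expansion of $\mathcal{P}(\alpha)$ already derived in the proof of Theorem~\ref{P_PD_K_alpha_I_theorem}. Recall that writing $\bar{s}=\sum_{j=2}^n \gamma_j q_j$ in the Laplacian eigenbasis yields $\mathcal{P}(\alpha)=\sum_{j=2}^n (1+\alpha^{-1}\lambda_j)^{-2}\gamma_j^2$. First I would subtract the two spectral sums to obtain
\begin{equation*}
\mathcal{P}(\beta)-\mathcal{P}(\alpha)=\sum_{j=2}^n h(\lambda_j)\,\gamma_j^2, \qquad h(x):=\frac{1}{(1+\beta^{-1}x)^2}-\frac{1}{(1+\alpha^{-1}x)^2}.
\end{equation*}
Since $\beta\ge\alpha$ forces $h(x)\ge0$ on $x\ge0$, and since $\sum_{j=2}^n\gamma_j^2=\|\bar s\|^2\le\|s\|^2\le R^2$ (the $j=1$ component is annihilated by mean-centering), it suffices to bound each term by the global maximum of $h$, i.e. to show
\begin{equation*}
\mathcal{P}(\beta)-\mathcal{P}(\alpha)\le \Big(\max_{x\ge0}h(x)\Big)\sum_{j=2}^n\gamma_j^2\le \Big(\max_{x\ge0}h(x)\Big)R^2,
\end{equation*}
and then to identify $\max_{x\ge0}h(x)=h(C)$ with $C$ as in Equation~\eqref{C}.

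The core of the argument is therefore the scalar maximization of $h$ over $[0,\infty)$. I would differentiate to get $h'(x)=2\big[\alpha^{-1}(1+\alpha^{-1}x)^{-3}-\beta^{-1}(1+\beta^{-1}x)^{-3}\big]$ and set $h'(x)=0$, which rearranges to $\big(\tfrac{1+\beta^{-1}x}{1+\alpha^{-1}x}\big)^3=\alpha/\beta$, equivalently $\tfrac{1+\beta^{-1}x}{1+\alpha^{-1}x}=(\alpha/\beta)^{1/3}$. Solving this linear equation for $x$ and clearing denominators by a factor $\beta^{1/3}$ returns exactly $x=C=\frac{\alpha^{1/3}-\beta^{1/3}}{\beta^{-2/3}-\alpha^{-2/3}}$; note $C\ge0$ because numerator and denominator are both nonpositive when $\alpha\le\beta$ (and the inequality is trivial when $\alpha=\beta$, where $h\equiv0$).

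To certify that $C$ is the \emph{global} maximizer rather than a spurious critical point, I would observe that $\varphi(x):=\tfrac{1+\beta^{-1}x}{1+\alpha^{-1}x}$ is strictly decreasing on $[0,\infty)$ (its derivative has the constant sign of $\beta^{-1}-\alpha^{-1}\le0$), running from $\varphi(0)=1$ down to $\alpha/\beta$; hence the level $(\alpha/\beta)^{1/3}\in(\alpha/\beta,1)$ is attained at a unique point $C$, and the associated sign analysis gives $h'>0$ for $x<C$ and $h'<0$ for $x>C$. Combined with $h(0)=0$ and $h(x)\to0$ as $x\to\infty$, this makes $h(C)$ the global maximum on $[0,\infty)$, which closes the chain of inequalities. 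The main obstacle I anticipate is purely algebraic: carrying out the rearrangement of $h'(x)=0$ and simplifying the cube-root expression into the compact form of Equation~\eqref{C} (the clearing of the $\beta^{1/3}$ factor is the fiddly step); by contrast, the monotonicity and sign bookkeeping that upgrade the critical point to a genuine maximum is routine once $\varphi$ is introduced.
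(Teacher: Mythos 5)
Your proposal is correct and follows essentially the same route as the paper: expand $\mathcal{P}(\beta)-\mathcal{P}(\alpha)$ in the Laplacian eigenbasis, bound the spectral sum by $R^2$ times the maximum of the scalar difference function $h(x)=\frac{1}{(1+\beta^{-1}x)^2}-\frac{1}{(1+\alpha^{-1}x)^2}$, and locate that maximum at $x=C$ by setting $h'(x)=0$. The only difference is one of rigor, in your favor: where the paper merely asserts that $C$ is the unique extremum and ``it can be seen'' to be the maximum, you actually verify global maximality via the monotonicity of $\varphi(x)=\frac{1+\beta^{-1}x}{1+\alpha^{-1}x}$ together with $h(0)=0$ and $h(x)\to 0$ as $x\to\infty$.
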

\begin{proof}
Let's define $E(x)$ as follow 
\begin{equation*}
E(x):=f(x,\beta)-f(x,\alpha), \quad f(x,c)=\frac{1}{(1+c^{-1}x)^2}.
\end{equation*}
The unique extremum of \( E(x) \), denoted as $C$, can be found by setting its first derivative, \( \frac{dE(x)}{dx} \), to zero, given by  
\[
C := \frac{\alpha^{1/3} - \beta^{1/3}}{\beta^{-2/3} - \alpha^{-2/3}}.
\]
It can be seen that $C$ is the maximum of $E(x)$ which implies that
\begin{equation}\label{max_E}
\max_{x >0} \big(f(x,\beta)-f(x,\alpha)\big)=E(C)=\frac{1}{\left(1+\beta^{-1} C\right)^2} - \frac{1}{\left(1+\alpha^{-1} C\right)^2}.
\end{equation}
Finally, 
\begin{equation*}
\mathcal{P}(\beta)-\mathcal{P}(\alpha)= \bar{s}^T \bigg((\textbf{I}+\beta^{-1} \textbf{L})^{-2}-(\textbf{I}+\alpha^{-1} \textbf{L})^{-2}  \bigg) \bar{s} 
\end{equation*}
\begin{equation*}
\quad \quad \quad \quad \quad \quad \quad =\sum_{j=2}^n \bigg(\frac{1}{(1+\beta^{-1} \lambda_j)^2}-\frac{1}{(1+\alpha^{-1} \lambda_j)^2}\bigg)\gamma_j^2. 
\end{equation*}
Thus, invoking Equation~\eqref{max_E} and taking into account that $\| \bar{s} \|_2 \leq \| s \|_2 \leq R$, the proof is thereby concluded.
\end{proof}


To better understand the upper bound in Theorem~\ref{p_changes_upper_bound}, we set $\alpha = 1$ and vary $\beta$ from $1$ to $10^5$ in Figure~\ref{fig:p_homogeneous_p}. We observe that the changes in the bound for small values of $\alpha$, i.e., $\beta <100$ is significant, but as $\beta$ grows the impact vanishes. This is because once polarization approaches the maximum achievable value, increasing stubbornness has negligible impact.


\begin{figure}[H]
\begin{center}
    \includegraphics[width=7cm, height=5cm]{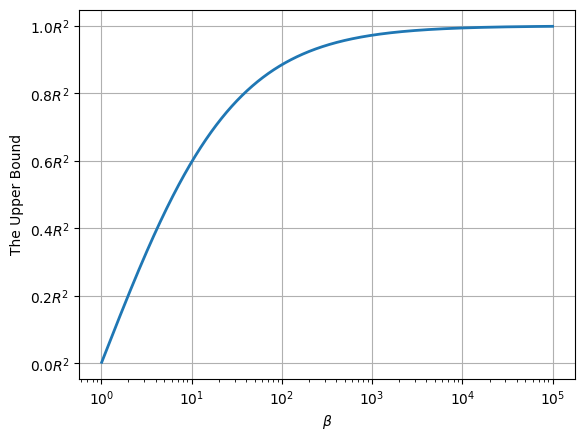}
\end{center}
\caption{Variation in the provided polarization upper bound in terms of the stubbornness factor.} \label{fig:p_homogeneous_p}
\end{figure}

\subsection{Stochastic Block Model Analysis}

One special setup which has attracted attention in the prior work, with the purpose of modeling social bubbles, is when there are two extreme communities which are not well-connected, modeled by the Stochastic Block Model (SBM), cf.~\cite{chitra2020analyzing,chitra2019understanding,bhalla2023local}. We give a full characterization of the connection between stubbornness and $\mathcal{PD}$ in this setup.

The SBM is a random graph model with parameters $n$ (consider to be an even number) and $p, q \in [0,1]$. In the studied setup, nodes are partitioned into two groups, $V_+=\{v_1,\cdots,v_{\frac{n}{2}}\}$ and $V_-=\{v_{\frac{n}{2}+1},\cdots,v_{n} \}$. Edges will be generated based on the following process. For all $v_i, v_j \in V$, if $v_i, v_j \in V_+$ or  $v_i, v_j \in V_-$, set $w_{v_iv_j}=1$ independently with probability $p$, and $w_{v_iv_j}=0$ otherwise. If $v_i \in V_+$, $v_j \in V_-$, set $w_{v_iv_j}=1$ independently with probability $q$ and $w_{v_iv_j}=0$ otherwise. We are focusing on the common scenario in SBM where $q$ is less than $p$. This means that the likelihood of two nodes being connected is higher when the nodes belong to the same community. (This is the special case considered in the prior work~\cite{chitra2020analyzing,chitra2019understanding,bhalla2023local}; otherwise, SBM is more general and allows more than two communities with different values of $p$.)

The expected adjacency matrix $\bar{\textbf{A}}$ will be given as 
\begin{equation}\label{adjacency_A}
\bar{\textbf{A}}_{ij}=
\begin{cases}
\text{p,} &\quad\text{if $i\in [1,n/2],  j \in [1,n/2]$ } \\
\text{q,} &\quad\text{if $i \in [1,n/2],  j \in [n/2+1,n]$ } \\
\text{q,} &\quad\text{if $i \in [n/2+1,n], j \in [1,n/2]$ } \\
\text{p,} &\quad\text{if $i \in [n/2+1,n], j \in [n/2+1,n]$} \\
\end{cases}
\end{equation}

Nodes in set \( V_- \) have innate opinions \(-1\), and nodes in set \( V_+ \) have innate opinions \(+1\) (this is supposed to mimic the \textit{homophily} property, which states more similar nodes are more likely to be connected). 
\begin{equation}\label{intenal_opinion_SBM}
s_i=
\begin{cases}
\text{-1,} &\quad\text{if $i\in V_-$ } \\
\text{+1,} &\quad\text{if $i \in V_+$ } \\
\end{cases}
\end{equation}

Let us make two observations: (i) $\bar{s}=s$ and (ii) $\bar{s}$ is an eigenvector of $\bar{\textbf{A}}$ with eigenvalue $\frac{(p-q)n}{2}$. Thus, $\bar{s}$ is an eigenvector of $\bar{\textbf{L}}=\frac{n}{2}(p+q)\textbf{I}-\bar{A}$ with eigenvalue $qn$. Now, we provide Theorem~\ref{SBM_theorem} (which is a generalization of the results from~\cite{chitra2020analyzing, bhalla2023local} for the vanilla FJ model).

\begin{theorem}\label{SBM_theorem}
Let $G=(V, E)$ be the SBM graph described above with the expected adjacency matrix defined in Equation~\eqref{adjacency_A} and innate opinion from Equation~\eqref{intenal_opinion_SBM}, and consider the stubbornness matrix $\textbf{K}=\alpha \textbf{I}$. Then, we have
\begin{equation*}
   \mathcal{PD}(\alpha)=\frac{\alpha^2(1+nq)n}{(nq+\alpha)^2}. 
\end{equation*}
\end{theorem}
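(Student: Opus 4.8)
The plan is to exploit the fact that $\bar{s}$ is a common eigenvector of every matrix appearing in the homogeneous closed-form expression for $\mathcal{PD}(\alpha)$, which collapses the quadratic form to a single scalar. First I would record the consequences of the two observations stated just before the theorem. Since the stubbornness is homogeneous ($\textbf{K}=\alpha\textbf{I}$), the computation in the proof of Theorem~\ref{P_PD_K_alpha_I_theorem} gives $\bar{s}_{\scriptscriptstyle\textbf{K}}=\bar{s}$, and observation (i) yields $\bar{s}=s$. Hence, by Equation~\eqref{PD_L_s_K},
\[
\mathcal{PD}(\alpha)=\bar{s}^T(\textbf{I}+\alpha^{-1}\textbf{L})^{-1}(\textbf{I}+\textbf{L})(\textbf{I}+\alpha^{-1}\textbf{L})^{-1}\bar{s},
\]
where $\textbf{L}=\bar{\textbf{L}}$ denotes the expected Laplacian of the model.

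The crucial input is observation (ii): $\bar{s}$ is an eigenvector of $\bar{\textbf{L}}$ with eigenvalue $qn$. Because $(\textbf{I}+\alpha^{-1}\textbf{L})^{-1}$ and $(\textbf{I}+\textbf{L})$ are both functions of $\textbf{L}$ in the sense of Equation~\eqref{f(L)}, the vector $\bar{s}$ is simultaneously an eigenvector of all three factors. Applying them from right to left, each contributes a scalar evaluated at the eigenvalue $qn$: the two copies of $(\textbf{I}+\alpha^{-1}\textbf{L})^{-1}$ each scale $\bar{s}$ by $\tfrac{1}{1+\alpha^{-1}qn}=\tfrac{\alpha}{\alpha+qn}$, while $(\textbf{I}+\textbf{L})$ scales it by $1+qn$. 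This reduces the quadratic form to
\[
\mathcal{PD}(\alpha)=\frac{\alpha^2(1+qn)}{(\alpha+qn)^2}\,\bar{s}^T\bar{s}.
\]

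Finally I would evaluate $\bar{s}^T\bar{s}=\|s\|^2=\sum_{i=1}^n s_i^2=n$, since each innate opinion satisfies $s_i=\pm1$ by Equation~\eqref{intenal_opinion_SBM}; substituting yields the claimed $\mathcal{PD}(\alpha)=\frac{\alpha^2(1+nq)n}{(nq+\alpha)^2}$. There is no serious obstacle in this argument, as everything follows once the eigenvector structure is in place; the only step needing care is the eigenvalue bookkeeping. One must confirm, via $\bar{\textbf{L}}=\tfrac{n}{2}(p+q)\textbf{I}-\bar{\textbf{A}}$ together with $\bar{\textbf{A}}\bar{s}=\tfrac{(p-q)n}{2}\bar{s}$, that the governing eigenvalue is $qn$ rather than $\tfrac{(p-q)n}{2}$ — so that it is the cross-community probability $q$, and not $p$, that appears in the final formula.
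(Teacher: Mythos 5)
Your proposal is correct and follows essentially the same route as the paper: both reduce to the homogeneous formula $\mathcal{PD}(\alpha)=\bar{s}^T(\textbf{I}+\alpha^{-1}\textbf{L})^{-1}(\textbf{I}+\textbf{L})(\textbf{I}+\alpha^{-1}\textbf{L})^{-1}\bar{s}$, exploit that $\bar{s}$ is an eigenvector of the expected Laplacian with eigenvalue $qn$, and conclude with $\|\bar{s}\|^2=n$. The only cosmetic difference is that the paper expands the quadratic form in the full eigenbasis and kills the off terms by orthonormality, whereas you apply the matrix functions directly to the eigenvector $\bar{s}$ — a trivially equivalent (and if anything slightly cleaner) presentation of the same argument.
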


\begin{proof}
As $\textbf{K}=\alpha \textbf{I}$, we will have $\bar{s}_{\scriptscriptstyle\textbf{K}}=\bar{s}$, and the $\mathcal{PD}$, which is given in Equation~\eqref{PD_L_s_K}, can be written as  
\begin{equation*}
(\textbf{I}+\alpha^{-1}\textbf{L})^{-1} (\textbf{I}+\textbf{L}) (\textbf{I}+\alpha^{-1} \textbf{L})^{-1}=\sum_{j=1}^n \frac{1+\lambda_j}{(1+\alpha^{-1} \lambda_j)^2} q_j q_j^T, 
\end{equation*}
where $(\lambda_j,q_j)$, for $j=1,\cdots,n$ is the eigenpair of the Laplacian matrix $\textbf{L}$. Now, the $\mathcal{PD}$ will have the following form 
\begin{equation*}
\mathcal{PD}=\sum_{j=2}^n \frac{1+\lambda_j}{(1+\alpha^{-1} \lambda_j)^2} \bar{s}^T q_j q_j^T \bar{s},
\end{equation*}
\begin{equation*}
\quad \quad \quad \quad \quad \quad \quad \quad =\sum_{j=2}^n \frac{1+\lambda_j}{(1+\alpha^{-1} \lambda_j)^2} (\| \bar{s} \|\frac{\bar{s}}{\| \bar{s} \|})^T q_j q_j^T (\| \bar{s} \|\frac{\bar{s}}{\| \bar{s} \|}). 
\end{equation*}
Given that $(q_n, \bar{s})$ is an eigenpair of $\mathbf{L}$ and utilizing the orthonormality property of the eigenvectors, the expression $q_j^T \frac{\bar{s}}{\| \bar{s} \|}$ equals one if and only if $q_j=\frac{\bar{s}}{\| \bar{s} \|}$ and zero otherwise. Consequently, the $\mathcal{PD}$ will be given as 
\begin{equation*}
\mathcal{PD}= \frac{1+qn}{(1+\alpha^{-1} qn)^2} \| \bar{s} \|^2, 
\end{equation*}
the proof is then concluded as $\| \bar{s} \|^2=n$. 
\end{proof}

According to Theorem~\ref{SBM_theorem},
the $\mathcal{PD}$'s asymptotic behavior is given by $\frac{\alpha^2}{q}$, which decreases linearly with the outgoing connection probability and increases quadratically with the stubbornness factor. Surprisingly, the value of $\mathcal{PD}$ doesn't depend on the value of $p$ (as far as our assumption $p>q$ is satisfied). Figure~\ref{fig:sbm_p_pd} illustrates the quadratic relationship between stubbornness and the $\mathcal{PD}$ index.

\begin{figure}[H]
\centering
\includegraphics[width=7cm, height=5cm]{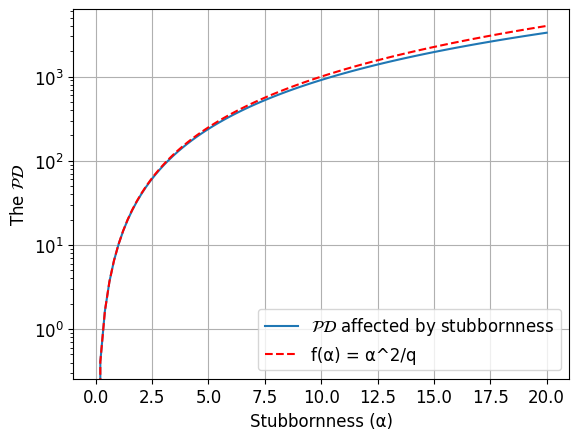}
\caption{Variation in $\mathcal{PD}$ (in logarithmic scale) as a function of network's stubbornness, within the SBM (block size = $1000$, outgoing connection probability $q = 0.1$). 
}
\label{fig:sbm_p_pd}
\end{figure}

\section{$\mathcal{PD}$ in inhomogeneous stubbornness} \label{inhomogeneous_section}
So far, our results confirm the natural guess that an increase in stubbornness intensifies the $\mathcal{PD}$. However, surprisingly, this is not always the case as observed in the example below.

\textbf{Example:} \textit{Increase in stubbornness can cause reduction in the $\mathcal{PD}$!} Consider a graph with three nodes ($A$ connected to $B$ and $B$ connected to $C$) with innate opinions $s_A=+1, s_B=-1$ and $s_C=0$. First, consider the scenario where they all have the same level of stubbornness $k_A=k_B=k_C=1$. In this setup, the $\mathcal{PD}$ is calculated to be $0.6250$. Now, if we increase the stubbornness of node $C$ from $k=1$ to $k=2$ (while keeping all other assumptions the same), the $\mathcal{PD}$ index in this scenario will decrease to $0.6075$. The reduction in $\mathcal{PD}$ does not occur solely when the innate opinion of the updated node is zero. Figure~\ref{fig:color_reduction} illustrates that within the entire range of $(-0.31,0.57)$, the $\mathcal{PD}$ decreases as the stubbornness of node $C$ increases from $k=1$ to $k=2$.

\begin{figure}[H]
\centering
\includegraphics[width=7cm, height=5cm]{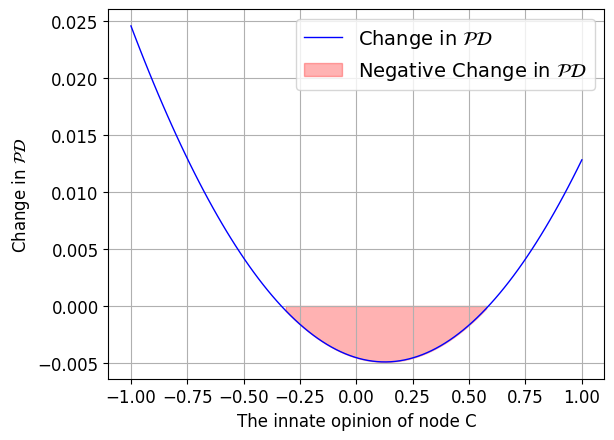}
\caption{The range of innate opinion of node $C$, for which increasing stubbornness from $k=1$ to $k=2$ causes reduction in the $\mathcal{PD}$.}
\label{fig:color_reduction}
\end{figure}

We will demonstrate that an $\epsilon$ increase in the stubbornness coefficient of a \textit{neutral} individual (a node with a zero innate opinion) in the FJ model leads to a decrease in the $\mathcal{PD}$ index. Let us establish some foundation before presenting our main results in Theorem~\ref{thm:neutral}.

\begin{lemma}\label{P1=1}
The vector $\Vec{\textbf{1}}$ is an eigenvector of the matrix $\textbf{P}$ with an eigenvalue of one, i.e., $\textbf{P} \Vec{\textbf{1}} = \Vec{\textbf{1}}$.
\end{lemma}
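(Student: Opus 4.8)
The plan is to compute $\textbf{P}\Vec{\textbf{1}}$ directly by peeling off the factors of $\textbf{P}=(\textbf{L}+\textbf{I})^{1/2}(\textbf{K}+\textbf{L})^{-1}\textbf{K}$ from the right. The key input is already available inside the proof of Lemma~\ref{sbar}: since $\textbf{L}\Vec{\textbf{1}}=\Vec{\textbf{0}}$ we have $(\textbf{L}+\textbf{K})\Vec{\textbf{1}}=\textbf{K}\Vec{\textbf{1}}$, and because $\textbf{L}+\textbf{K}$ is strictly positive definite (hence invertible) this rearranges to the identity $(\textbf{K}+\textbf{L})^{-1}\textbf{K}\Vec{\textbf{1}}=\Vec{\textbf{1}}$. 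So the first step would be to invoke this identity to collapse the rightmost two factors.

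Applying that identity gives $\textbf{P}\Vec{\textbf{1}}=(\textbf{L}+\textbf{I})^{1/2}(\textbf{K}+\textbf{L})^{-1}\textbf{K}\Vec{\textbf{1}}=(\textbf{L}+\textbf{I})^{1/2}\Vec{\textbf{1}}$, so the whole claim reduces to showing $(\textbf{L}+\textbf{I})^{1/2}\Vec{\textbf{1}}=\Vec{\textbf{1}}$. For this second step I would use that $\Vec{\textbf{1}}$ is the (unnormalized) eigenvector $q_1$ of $\textbf{L}$ corresponding to the eigenvalue $\lambda_1=0$. Via the functional-calculus definition in Equation~\eqref{f(L)} applied to $f(y)=\sqrt{1+y}$, we have $(\textbf{L}+\textbf{I})^{1/2}=\sum_{j=1}^n \sqrt{1+\lambda_j}\, q_j q_j^T$, and acting on $\Vec{\textbf{1}}\propto q_1$ this yields the eigenvalue $\sqrt{1+\lambda_1}=\sqrt{1}=1$. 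Hence $(\textbf{L}+\textbf{I})^{1/2}\Vec{\textbf{1}}=\Vec{\textbf{1}}$, which closes the argument and shows $\textbf{P}\Vec{\textbf{1}}=\Vec{\textbf{1}}$.

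There is no substantive obstacle here; the statement is essentially a bookkeeping consequence of two facts the paper has already established, namely $\textbf{L}\Vec{\textbf{1}}=\Vec{\textbf{0}}$ and the invertibility of $\textbf{L}+\textbf{K}$. The only point that requires a line of care is the handling of the matrix square root: rather than manipulating $(\textbf{L}+\textbf{I})^{1/2}$ as an opaque object, I would justify $(\textbf{L}+\textbf{I})^{1/2}\Vec{\textbf{1}}=\Vec{\textbf{1}}$ explicitly through the spectral decomposition so that the eigenvalue-one property on the $\Vec{\textbf{1}}$ direction is transparent. Everything else is a short chain of substitutions.
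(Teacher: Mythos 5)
Your proof is correct and follows essentially the same route as the paper's: both reduce the claim to the two facts $(\textbf{K}+\textbf{L})^{-1}\textbf{K}\Vec{\textbf{1}}=\Vec{\textbf{1}}$ and $(\textbf{L}+\textbf{I})^{1/2}\Vec{\textbf{1}}=\Vec{\textbf{1}}$, the latter via the spectral decomposition at the eigenvalue $\lambda_1=0$. The only cosmetic difference is that you obtain the first identity by left-multiplying $(\textbf{L}+\textbf{K})\Vec{\textbf{1}}=\textbf{K}\Vec{\textbf{1}}$ by $(\textbf{L}+\textbf{K})^{-1}$ (as in Lemma~\ref{sbar}), whereas the paper writes $\big((\textbf{L}+\textbf{K})^{-1}\textbf{K}\big)^{-1}=\textbf{K}^{-1}\textbf{L}+\textbf{I}$ and inverts, which additionally uses invertibility of $\textbf{K}$; your variant is marginally cleaner.
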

\begin{proof}
Note that 
    $\big((\textbf{L}+\textbf{K})^{-1} \textbf{K}\big)^{-1}=\textbf{K}^{-1} \textbf{L} + \textbf{I}$.
As $\textbf{L} \Vec{\textbf{1}}=0$, $(\textbf{I}+\textbf{K}^{-1} \textbf{L}) \Vec{\textbf{1}} = \Vec{\textbf{1}}$. Hence, $\Vec{\textbf{1}} = (\textbf{I}+\textbf{K}^{-1} \textbf{L})^{-1} \Vec{\textbf{1}} $. As the eigenvector of $\textbf{L}$ and $(\textbf{I}+\textbf{L})^{1/2}$ are the same, $\Vec{\textbf{1}}$ is the eigenvector of matrix $(\textbf{L}+\textbf{I})^{1/2}$ and the related eigenvalue is $1$, so $\Vec{\textbf{1}}$ is the eigenvector of matrix $\textbf{P}$ as 
\begin{equation*}
    \textbf{P}  \Vec{\textbf{1}} =\bigg((\textbf{L}+\textbf{I})^{1/2} (\textbf{L}+\textbf{K})^{-1} \textbf{K}\bigg) \Vec{\textbf{1}}= \bigg((\textbf{L}+\textbf{I})^{1/2} (\textbf{I}+\textbf{K}^{-1} \textbf{L})^{-1} \bigg) \Vec{\textbf{1}}= \Vec{\textbf{1}}. 
\end{equation*}
\end{proof}

In the following lemma, we derive a formula for the $\mathcal{PD}$ based on $s$ rather than $\bar{s}_{\scriptscriptstyle\textbf{K}}$.

\begin{lemma}\label{PD_hat}
Let \(\widehat{\mathcal{PD}}\) be the polarization-disagreement when the stubbornness of a specific node \( l \) increases by \(\epsilon\), then for any mean-centered internal opinion vector $s \in \mathbb{R}^n$, 
\begin{equation*}
\widehat{\mathcal{PD}}=\bar{s}^T \textbf{P}^T \textbf{P} \bar{s}-\frac{1}{n}  \langle s, \Vec{\textbf{1}}_{\textbf{K}} \rangle^2.
\end{equation*}
\end{lemma}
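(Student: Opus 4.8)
The plan is to start from the compact matrix form of the $\mathcal{PD}$ already established in Equation~\eqref{PD_first}, namely $\mathcal{PD}=\bar{s}_{\scriptscriptstyle\textbf{K}}^T \textbf{P}^T \textbf{P}\, \bar{s}_{\scriptscriptstyle\textbf{K}}$, and to push the dependence on $\bar{s}_{\scriptscriptstyle\textbf{K}}$ back onto $s$. By Lemma~\ref{sbar} together with the hypothesis that $s$ is mean-centered (so $\bar{s}=s$), we have $\bar{s}_{\scriptscriptstyle\textbf{K}}=s-\tfrac{1}{n}\langle s,\Vec{\textbf{1}}_{\textbf{K}}\rangle\,\Vec{\textbf{1}}$. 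Substituting this and expanding the quadratic form yields three pieces: the main term $\bar{s}^T\textbf{P}^T\textbf{P}\,\bar{s}$, a cross term proportional to $\bar{s}^T\textbf{P}^T\textbf{P}\,\Vec{\textbf{1}}$, and a term proportional to $\Vec{\textbf{1}}^T\textbf{P}^T\textbf{P}\,\Vec{\textbf{1}}$. (All of this is understood with the stubbornness matrix in which node $l$ has been boosted by $\epsilon$, so that $\textbf{P}$ and $\Vec{\textbf{1}}_{\textbf{K}}$ denote the post-increase quantities.)

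The crux is to evaluate the two terms involving $\Vec{\textbf{1}}$ using the structure of $\textbf{P}$. Lemma~\ref{P1=1} gives $\textbf{P}\Vec{\textbf{1}}=\Vec{\textbf{1}}$, which immediately collapses the last term to $\Vec{\textbf{1}}^T\textbf{P}^T\textbf{P}\,\Vec{\textbf{1}}=\|\textbf{P}\Vec{\textbf{1}}\|^2=\|\Vec{\textbf{1}}\|^2=n$. For the cross term I would first rewrite $\textbf{P}^T\textbf{P}\,\Vec{\textbf{1}}=\textbf{P}^T(\textbf{P}\Vec{\textbf{1}})=\textbf{P}^T\Vec{\textbf{1}}$, and then identify $\textbf{P}^T\Vec{\textbf{1}}$ explicitly. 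Since every factor of $\textbf{P}=(\textbf{L}+\textbf{I})^{1/2}(\textbf{L}+\textbf{K})^{-1}\textbf{K}$ is symmetric, reversing the order under transposition gives $\textbf{P}^T\Vec{\textbf{1}}=\textbf{K}(\textbf{L}+\textbf{K})^{-1}(\textbf{L}+\textbf{I})^{1/2}\Vec{\textbf{1}}$; because $\Vec{\textbf{1}}$ is an eigenvector of $\textbf{L}$ with eigenvalue $0$ it is fixed by $(\textbf{L}+\textbf{I})^{1/2}$, leaving $\textbf{P}^T\Vec{\textbf{1}}=\textbf{K}(\textbf{L}+\textbf{K})^{-1}\Vec{\textbf{1}}=\Vec{\textbf{1}}_{\textbf{K}}$ by the very definition of $\Vec{\textbf{1}}_{\textbf{K}}$. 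Hence the cross term reduces to $\bar{s}^T\Vec{\textbf{1}}_{\textbf{K}}=\langle s,\Vec{\textbf{1}}_{\textbf{K}}\rangle$.

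Finally I would collect the pieces: writing $c:=\tfrac1n\langle s,\Vec{\textbf{1}}_{\textbf{K}}\rangle$ and using symmetry of $\textbf{P}^T\textbf{P}$ to merge the two cross terms, the expansion gives $\widehat{\mathcal{PD}}=\bar{s}^T\textbf{P}^T\textbf{P}\,\bar{s}-2c\,\langle s,\Vec{\textbf{1}}_{\textbf{K}}\rangle+c^2 n=\bar{s}^T\textbf{P}^T\textbf{P}\,\bar{s}-\tfrac1n\langle s,\Vec{\textbf{1}}_{\textbf{K}}\rangle^2$, which is exactly the claimed identity. I expect the only non-mechanical step to be spotting the identity $\textbf{P}^T\Vec{\textbf{1}}=\Vec{\textbf{1}}_{\textbf{K}}$: it is what makes both $\Vec{\textbf{1}}$-terms reduce to inner products involving $\Vec{\textbf{1}}_{\textbf{K}}$ and causes the $-2c$ and $+c^2 n$ contributions to combine into the single clean $-\tfrac1n\langle s,\Vec{\textbf{1}}_{\textbf{K}}\rangle^2$; the remaining manipulations are routine bookkeeping on the quadratic expansion.
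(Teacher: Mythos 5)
Your proposal is correct and follows essentially the same route as the paper: decompose $\bar{s}_{\scriptscriptstyle\textbf{K}}$ as $\bar{s}$ plus a multiple of $\Vec{\textbf{1}}$ via Lemma~\ref{sbar}, expand the quadratic form, and collapse the $\Vec{\textbf{1}}$-terms using Lemma~\ref{P1=1} together with $(\textbf{L}+\textbf{I})^{1/2}\Vec{\textbf{1}}=\Vec{\textbf{1}}$, which is exactly the identity $\textbf{P}^T\Vec{\textbf{1}}=\Vec{\textbf{1}}_{\textbf{K}}$ you isolate. Your write-up is in fact slightly more explicit than the paper's about that key identity, but the argument is the same.
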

\begin{proof}
First, note that following Lemma~\ref{sbar}, $\bar{s}_{\scriptscriptstyle\textbf{K}}=\bar{s}+\mu \Vec{\textbf{1}}$ where $\mu=\frac{s^T (\Vec{\textbf{1}}-\Vec{\textbf{1}}_{\textbf{K})}}{n}$, and so \(\widehat{\mathcal{PD}}\) can be written as 
\begin{equation*}
\widehat{\mathcal{PD}}=\big(\bar{s}+\mu \Vec{\textbf{1}}\big)^T \textbf{P}^T \textbf{P} \big(\bar{s}+\mu \Vec{\textbf{1}}\big)
\end{equation*}
\begin{equation*}
\quad \quad \quad \quad \quad \quad \quad \quad \quad =\bar{s}^T \textbf{P}^T \textbf{P} \bar{s}+\mu \bar{s}^T \textbf{P}^T \textbf{P} \Vec{\textbf{1}}+\mu \Vec{\textbf{1}}^T \textbf{P}^T \textbf{P} \bar{s}+\mu^2 \Vec{\textbf{1}}^T \textbf{P}^T \textbf{P} \Vec{\textbf{1}}. 
\end{equation*}
which can be simplified as 
\begin{equation*}
\widehat{\mathcal{PD}}=\bar{s}^T \textbf{P}^T \textbf{P} \bar{s}+2 \mu \langle \bar{s}, \Vec{\textbf{1}}_{\textbf{K}} \rangle+\mu^2 n,
\end{equation*}
where we used the fact that $\textbf{P} \Vec{\textbf{1}}=\Vec{\textbf{1}}$ as we proved in Lemma~\ref{P1=1} and also the fact that $(\textbf{I}+\textbf{L})^{1/2} \Vec{\textbf{1}}=\Vec{\textbf{1}}$. Now, following our assumption that the innate opinion vector has mean zero, we will have 
$\mu=\frac{1}{n} \big(\langle s, \Vec{\textbf{1}} \rangle - \langle s, \Vec{\textbf{1}}_{\textbf{K}} \rangle \big)=-\frac{1}{n} \langle s, \Vec{\textbf{1}}_{\textbf{K}} \rangle $ which completes the proof as  
\begin{equation*}
2 \mu \langle \bar{s}, \Vec{\textbf{1}}_{\textbf{K}} \rangle+\mu^2 n=-\frac{2}{n}  \langle s, \Vec{\textbf{1}}_{\textbf{K}} \rangle^2+\frac{1}{n}  \langle s, \Vec{\textbf{1}}_{\textbf{K}} \rangle^2=-\frac{1}{n}  \langle s, \Vec{\textbf{1}}_{\textbf{K}} \rangle^2.
\end{equation*}
\end{proof}

For simplicity, let $r_{ij}:=[(\textbf{I}+\textbf{L})^{-1}]_{ij}$ and note that $r_{ij} \geq 0$ since $\textbf{I}+\textbf{L}$ is an M-matrix\footnote{
An M-matrix is defined as a matrix with non-positive off-diagonal entries and non-negative real eigenvalues.
}and, thus, is inverse-positive~\cite{plemmons1977m}. If the stubbornness of the $l$-th node in the network increases by $\epsilon$, then $\textbf{K}$ can be written as $\textbf{K}=\textbf{I}+\epsilon \textbf{e}_l \textbf{e}_l^T$ where $\textbf{e}_l \in \mathbb{R}^{n \times 1}$ is the $l$-th standard basis vector.
Using the Sherman–Morrison formula~\cite{sherman1950adjustment} 
\begin{equation*}
(\textbf{L}+\textbf{K})^{-1}=\bigg(\textbf{I}+\textbf{L}+\epsilon \textbf{e}_l \textbf{e}_l^T\bigg)^{-1}=(\textbf{I}+\textbf{L})^{-1}-\epsilon \frac{(\textbf{I}+\textbf{L})^{-1} \textbf{e}_l \textbf{e}_l^T (\textbf{I}+\textbf{L})^{-1}}{1+\epsilon  \textbf{e}_l^T (\textbf{I}+\textbf{L})^{-1}\textbf{e}_l}, 
\end{equation*}
where $\epsilon \textbf{e}_l^T (\textbf{I}+\textbf{L})^{-1} \textbf{e}_l=\epsilon r_{ll}$. So, 
\begin{equation*}
(\textbf{L}+\textbf{K})^{-1} \textbf{K}=(\textbf{I}+\textbf{L})^{-1}+\epsilon (\textbf{I}+\textbf{L})^{-1} \textbf{e}_l \textbf{e}_l^T
\end{equation*}
\begin{equation*}
-\epsilon \frac{(\textbf{I}+\textbf{L})^{-1} \textbf{e}_l \textbf{e}_l^T (\textbf{I}+\textbf{L})^{-1}}{1+\epsilon r_{ll}}-\epsilon^2 \frac{(\textbf{I}+\textbf{L})^{-1} \textbf{e}_l \textbf{e}_l^T (\textbf{I}+\textbf{L})^{-1}\textbf{e}_l \textbf{e}_l^T} {1+\epsilon r_{ll}}, 
\end{equation*}

which can be simplified as 
\begin{equation} \label{(L+K)^-1K*}
(\textbf{L}+\textbf{K})^{-1} \textbf{K}=(\textbf{I}+\textbf{L})^{-1}-\textbf{Z}+\frac{\epsilon}{1+\epsilon r_{ll}} (\textbf{I}+\textbf{L})^{-1} \textbf{e}_l \textbf{e}_l^T, 
\end{equation}
where $\textbf{Z}:=\epsilon \frac{(\textbf{I}+\textbf{L})^{-1} \textbf{e}_l \textbf{e}_l^T (\textbf{I}+\textbf{L})^{-1}}{1+\epsilon r_{ll}}$. Similar calculations show that 
\begin{equation}\label{K*(L+K)^-1}
\textbf{K}(\textbf{L}+\textbf{K})^{-1}=(\textbf{I}+\textbf{L})^{-1}-\textbf{Z}+\frac{\epsilon}{1+\epsilon r_{ll}} \textbf{e}_l \textbf{e}_l^T (\textbf{I}+\textbf{L})^{-1}. 
\end{equation}

\begin{theorem}
\label{thm:neutral}
Let \(\mathcal{PD} \) be the polarization-disagreement  in the FJ model and \(\widehat{\mathcal{PD}}\)
denotes the polarization-disagreement when the stubbornness of a specific node \( l \) with $s_l=0$ is increased by \(\epsilon\). Then,
\begin{equation*}
\widehat{\mathcal{PD}}=\mathcal{PD}-\frac{1}{n}  \langle s, \Vec{\textbf{1}}_{\textbf{K}} \rangle^2-\frac{2 \epsilon+\epsilon^2 r_{ll}}{(1+\epsilon r_{ll})^2}(\bar{z}_l^{\scriptscriptstyle{FJ}})^2.
\end{equation*}
\end{theorem}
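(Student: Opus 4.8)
The plan is to start from Lemma~\ref{PD_hat}, which already peels off the additive term $-\frac{1}{n}\langle s, \Vec{\textbf{1}}_{\textbf{K}}\rangle^2$. Thus it suffices to show that the remaining quadratic form obeys $\bar{s}^T \textbf{P}^T \textbf{P} \bar{s} = \mathcal{PD} - \frac{2\epsilon+\epsilon^2 r_{ll}}{(1+\epsilon r_{ll})^2}(\bar{z}_l^{\scriptscriptstyle{FJ}})^2$, where $\mathcal{PD}$ and $\bar{z}^{\scriptscriptstyle{FJ}}$ refer to the vanilla ($\textbf{K}=\textbf{I}$) model. Since $s$ is mean-centered we have $\bar{s}=s$, and the vanilla quantities read $\mathcal{PD}=s^T(\textbf{I}+\textbf{L})^{-1}s$ and $\bar{z}^{\scriptscriptstyle{FJ}}=(\textbf{I}+\textbf{L})^{-1}s$. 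Writing $\textbf{P}^T\textbf{P}=\textbf{M}^T(\textbf{I}+\textbf{L})\textbf{M}$ with $\textbf{M}:=(\textbf{L}+\textbf{K})^{-1}\textbf{K}$, the whole task reduces to evaluating $(\textbf{M}s)^T(\textbf{I}+\textbf{L})(\textbf{M}s)$ after the stubbornness of node $l$ is increased by $\epsilon$.

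The second step is to expose the rank-one structure of $\textbf{M}$. Set $u:=(\textbf{I}+\textbf{L})^{-1}\textbf{e}_l$, the $l$-th column of $(\textbf{I}+\textbf{L})^{-1}$, so that $u_i=r_{il}$ and $r_{ll}=\textbf{e}_l^T u$; by symmetry $\textbf{e}_l^T(\textbf{I}+\textbf{L})^{-1}=u^T$, and hence the matrix $\textbf{Z}$ appearing in Equation~\eqref{(L+K)^-1K*} equals $\frac{\epsilon}{1+\epsilon r_{ll}}\,u u^T$. Substituting back into Equation~\eqref{(L+K)^-1K*} collapses its three pieces into a single rank-one correction $\textbf{M}=(\textbf{I}+\textbf{L})^{-1}+\frac{\epsilon}{1+\epsilon r_{ll}}\,u(\textbf{e}_l-u)^T$. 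Applying this to $s$ and invoking the neutrality hypothesis $s_l=0$ together with $u^T s=[(\textbf{I}+\textbf{L})^{-1}s]_l=\bar{z}_l^{\scriptscriptstyle{FJ}}$ yields the compact form $\textbf{M}s=\bar{z}^{\scriptscriptstyle{FJ}}-c\,u$, where $c:=\frac{\epsilon\,\bar{z}_l^{\scriptscriptstyle{FJ}}}{1+\epsilon r_{ll}}$.

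The third step evaluates the quadratic form, and the device that keeps it short is to apply $(\textbf{I}+\textbf{L})$ before expanding: since $(\textbf{I}+\textbf{L})\bar{z}^{\scriptscriptstyle{FJ}}=s$ and $(\textbf{I}+\textbf{L})u=\textbf{e}_l$, we obtain $(\textbf{I}+\textbf{L})\textbf{M}s=s-c\,\textbf{e}_l$, so that $\bar{s}^T\textbf{P}^T\textbf{P}\bar{s}=(\textbf{M}s)^T(s-c\,\textbf{e}_l)$. Expanding the four inner products and using $(\bar{z}^{\scriptscriptstyle{FJ}})^T s=\mathcal{PD}$, $\bar{z}_l^{\scriptscriptstyle{FJ}}=u^T s$, and $u^T\textbf{e}_l=r_{ll}$ gives $\mathcal{PD}-2c\,\bar{z}_l^{\scriptscriptstyle{FJ}}+c^2 r_{ll}$. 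Substituting $c$ and placing the two correction terms over the common denominator $(1+\epsilon r_{ll})^2$ turns the numerator into $-2\epsilon(1+\epsilon r_{ll})+\epsilon^2 r_{ll}=-(2\epsilon+\epsilon^2 r_{ll})$, which is exactly the claimed coefficient; combining with Lemma~\ref{PD_hat} completes the argument.

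The main obstacle is the bookkeeping in the second step: one must recognize that the three-term Sherman--Morrison expansion of $(\textbf{L}+\textbf{K})^{-1}\textbf{K}$ collapses to a single rank-one update once $\textbf{Z}$ is rewritten through the symmetry of $(\textbf{I}+\textbf{L})^{-1}$, and that the hypothesis $s_l=0$ is precisely what annihilates the $\textbf{e}_l^T s$ contribution, forcing the correction to depend on $s$ only through the single scalar $\bar{z}_l^{\scriptscriptstyle{FJ}}$. Everything downstream is routine algebra, and the trick of multiplying by $(\textbf{I}+\textbf{L})$ prior to expanding is what lets us avoid ever squaring out $\textbf{M}$ explicitly.
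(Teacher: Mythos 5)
Your proof is correct and takes essentially the same route as the paper's: both reduce to computing $\bar{s}^T\textbf{P}^T\textbf{P}\bar{s}$ via Lemma~\ref{PD_hat}, invoke the Sherman--Morrison expansion in Equation~\eqref{(L+K)^-1K*}, use the neutrality hypothesis $s_l=0$ to annihilate the $\textbf{e}_l\textbf{e}_l^T\bar{s}$ contributions, and express the remaining correction through $\bar{z}_l^{\scriptscriptstyle FJ}$ and $r_{ll}$. Your only deviation is organizational --- you collapse the expansion into a single rank-one update $\textbf{M}=(\textbf{I}+\textbf{L})^{-1}+\tfrac{\epsilon}{1+\epsilon r_{ll}}\,u(\textbf{e}_l-u)^T$ and work at the vector level, whereas the paper expands the quadratic form as $\mathcal{PD}-2\bar{s}^T\textbf{Z}\bar{s}+\bar{s}^T\textbf{Z}(\textbf{I}+\textbf{L})\textbf{Z}\bar{s}$ --- but the underlying algebra is identical.
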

\begin{proof}
Following Equations~\eqref{(L+K)^-1K*},~\eqref{K*(L+K)^-1}, and noting that $\textbf{e}_l \textbf{e}_l^T \bar{s}=\Vec{0}$ and also $\bar{s}^T \textbf{e}_l \textbf{e}_l^T=\Vec{0}$, the term $\bar{s} \textbf{P}^T \textbf{P} \bar{s}$ will be reduced to 
\begin{equation*}
\bar{s} \textbf{P}^T \textbf{P} \bar{s}=\bar{s} ^T (\textbf{I}+\textbf{L})^{-1} \bar{s} -2 \bar{s} ^T \textbf{Z} \bar{s}+ \bar{s} ^T \textbf{Z} (\textbf{I}+\textbf{L}) \textbf{Z} \bar{s},
\end{equation*}
\begin{equation*}
=\mathcal{PD} -2 \bar{s} ^T \textbf{Z} \bar{s}+ \frac{\epsilon r_{ll}}{1+\epsilon r_{ll}}\bar{s} ^T \textbf{Z} \bar{s}. 
\end{equation*}
As the expressed opinion in FJ model is given by $\bar{z}^{\scriptscriptstyle{FJ}}=(\textbf{I}+\textbf{L})^{-1}\bar{s}$ and  $\bar{s} ^T (\textbf{I}+\textbf{L})^{-1} \textbf{e}_l \textbf{e}_l^T (\textbf{I}+\textbf{L})^{-1} \bar{s}=(\bar{z}^{\scriptscriptstyle{FJ}})^T \textbf{e}_l \textbf{e}_l^T \bar{z}^{\scriptscriptstyle{FJ}}=(\bar{z}_l^{\scriptscriptstyle{FJ}})^2$ we will have 
\begin{equation*}
\bar{s} \textbf{P}^T \textbf{P} \bar{s}=\mathcal{PD}-\frac{2 \epsilon}{(1+\epsilon r_{ll})}(\bar{z}_l^{\scriptscriptstyle{FJ}})^2 +\frac{\epsilon^2 r_{ll}}{(1+\epsilon r_{ll})^2} (\bar{z}_l^{\scriptscriptstyle{FJ}})^2
\end{equation*}
which can be simplified as 
$\bar{s} \textbf{P}^T \textbf{P} \bar{s}=\mathcal{PD}-\frac{2 \epsilon+\epsilon^2 r_{ll}}{(1+\epsilon r_{ll})^2}(\bar{z}_l^{\scriptscriptstyle{FJ}})^2. $
Applying Lemma~\ref{PD_hat} and substituting $\bar{s} \textbf{P}^T \textbf{P} \bar{s}$,  finishes the proof.
\end{proof}

\section{Experiments} \label{experiments}
In this section, we provide our experimental findings, which not only corroborate our theoretical results but also shed some lights on some other interesting questions.
\subsection{Setup}

\textbf{Synthetic Graphs.} We consider three classes of synthetic graphs: Erd\"{o}s–R\'{e}nyi (ER) random graph with parameter $p$ (the probability for each pair of nodes to form an edge, independently), Barabási-Albert (BA) graphs with parameter $m_{ba}$ (number of edges to attach from a new node to existing nodes), and the SBM consists of two blocks, where each intra-block edge appears with probability $p$ and each inter-block edge appears with probability $q$, independently (please see Section~\ref{homogeneous_section} for more details on SBM).  The number of nodes in all of these cases is $1000$.  \\
\textbf{Real-world Networks.} We also analyze 
Twitch-PT~\cite{rozemberczki2019multiscale} (1,912 nodes and 31,299 edges), Facebook~\cite{leskovec2012learning} (4,039 nodes  and 88,234 edges), Twitch-ES~\cite{rozemberczki2019multiscale} (4,648 nodes and 59,382 edges), LastFM Asia~\cite{feather} (7,624 nodes and 27,806 edges) ans also Twitch-DE~\cite{rozemberczki2019multiscale} (9,498 nodes and  153,138 edges) social networks. \\
\textbf{Innate Opinions.} We sampled innate opinions from the interval $[-1,1]$ using both uniform and Gaussian distributions.
\\
\textbf{Repetitions.} For synthetic graph cases, each experiment is repeated $100$ times, whereas for real-data sets, each experiment is repeated $50$ times and the average value is reported. \\
\textbf{Change Measure.} Our base case is the vanilla FJ model (where all nodes have stubbornness 1). Thus, when assessing changes in the $\mathcal{PD}$, we calculate the relative change to the $\mathcal{PD}$ in the FJ model, expressed as
$\frac{\mathcal{PD} - \mathcal{PD}^{\scriptscriptstyle{FJ}}}{\mathcal{PD}^{\scriptscriptstyle{FJ}}}$
where $\mathcal{PD}$ represents the measured $\mathcal{PD}$ and $\mathcal{PD}^{\scriptscriptstyle{FJ}}$ denotes the $\mathcal{PD}$ in the FJ model.

\begin{table*}[h]\scriptsize
\centering
\caption{Effect of increasing stubbornness on the $\mathcal{PD}$ in six combinations of low, average, and high degree and neutral and non-neutral nodes.}  \label{table:Twitch_PT_FB}
\begin{tabular}{cccccccccccccccccccccccccccc}
\hline \hline
& & \multicolumn{2}{c}{Twitch-PT} & & \multicolumn{2}{c}{Facebook} & & \multicolumn{2}{c}{Twitch-ES} & & \multicolumn{2}{c}{LastFM Asia} & & \multicolumn{2}{c}{Twitch-DE}\\
\cline{3-4}\cline{6-7}\cline{9-10}\cline{12-13}\cline{15-16}
$Case$ & & $s = 0$ & $s \neq 0$ & & $s = 0$ & $s \neq 0$ & & $s = 0$ & $s \neq 0$ & & $s = 0$ & $s \neq 0$ & & $s = 0$ & $s \neq 0$ \\
\hline
Low & & -0.0\% (8\%) & 1.2\% (100\%) & & -0.0\% (6\%) & 1.8\% (100\%) & & -0.0\% (0\%)& 12\% (100\%) & & -0.0\% (0\%) & 2.0\% (100\%) & & -0.0\% (0\%) & 15\% (100\%)\\ 
Medium & & -0.0\% (20\%) & 1.4\% (100\%) & & -0.2\%  (0\%) & 0.8\% (100\%) & & -0.0\% (0\%) & 16\% (100\%) & & -0.0\% (0\%) & 1.6\% (100\%) & & -0.0\% (8\%) & 10\% (100\%)\\
High & & -0.1\% (3\%) & 1.9\% (100\%) & & -0.2\% (0\%) & 0.7\% (100\%) & & -0.0\% (0\%) & 15\% (100\%) & & -0.0\% (0\%) & 1.8\% (100\%) & & -0.0\% (0\%) & 18\% (100\%)\\
\hline \hline
\end{tabular}
\end{table*}

\subsection{Findings}

\textbf{I. Randomly Generated Setup}.
We first consider the setup where the innate opinion vectors are chosen based on uniform distribution. Initially, the stubbornness of all nodes are set to 1. Then, the stubbornness of a single node is increased to $10$. For different parameters in both ER and BA graphs, we calculated the average change in $\mathcal{PD}$ and the percentage of cases that resulted in an increase in $\mathcal{PD}$, as shown in Table~\ref{table:synthetic_graphs}.

The notation $2.8\% (98.4\%)$, for example, indicates that the average change is $2.8\%$, and for $98.4\%$ of the cases (nodes), the $\mathcal{PD}$ increases. As one might expect for most nodes, an increase in stubbornness is positively correlated with increase in $\mathcal{PD}$. In particular, in the ER graph for $p=0.5$, where one of all graphs on $n$ nodes is chosen uniformly at random, the increase occurs for $98.4\%$ of the nodes. Furthermore, we observe that in ER graph as $p$ approaches 1 (complete graph), this relation becomes stronger, that is, for a larger percentage of nodes an increase in stubbornness exacerbates the $\mathcal{PD}$. Similar results were observed for the BA graph, which resembles real-world networks. For the real-world data, our results are as follows: Twitch-PT graph $0.7\% \, (85\%)$, Facebook $0.8\% \, (98\%)$, Twitch-ES $0.4\% \, (100\%)$, LastFM $0.1\% \, (97\%)$, and Twitch-DE $0.0\% \, (95\%)$.
\begin{center}
\textit{An increase in stubbornness ``usually'' causes an increase in $\mathcal{PD}$.}
\end{center}

\begin{table}[h]
\centering
\caption{Average Change in $\mathcal{PD}$ and Percentage of Positive Samples (P.S) for Different Parameters in ER and BA Graphs.}
\label{table:synthetic_graphs}
\begin{tabular}{cccccccccc}
\toprule
\textbf{$p$} & \textbf{ER: Change\% (P.S)} & & \textbf{$m_{ba}$} & \textbf{BA: Change\% (P.S)} \\
\midrule
0.05 & 1.6\% (91.2\%) & & 1 & 0.0\% (84.0\%) \\
0.25 & 2.6\% (97.7\%) & & 2 & 0.1\% (89.0\%) \\
0.50 & 2.8\% (98.4\%) & & 3 & 0.1\% (93.0\%) \\
0.75 & 2.8\% (99.4\%) & & 4 & 0.1\% (92.0\%) \\
0.95 & 3.0\% (99.5\%) & & 5 & 0.2\% (97.0\%) \\
1.00 & 3.0\% (100\%) & & 6 & 0.2\% (94.0\%) \\
\bottomrule
\end{tabular}
\end{table}

\noindent \textbf{II. Magnitude of Change.} To get a better understanding of the magnitude of change in $\mathcal{PD}$ 
as a result of increasing stubbornness, we provide the results in Figure~\ref{PD_versus_alphe_homogeneous}. We consider the homogeneous setup and vary the stubbornness factor $\alpha$ and track the changes in $\mathcal{PD}$ with respect to the vanilla FJ model ($\alpha=1$). We considered all considered networks and the ER and BA graphs (where the parameters for these graphs were chosen to make their number of nodes and average degree comparable to the real-world graphs). A uniform distribution was used for the choice of innate opinions.

\begin{figure}[h]
\includegraphics[width=6cm, height=5cm]{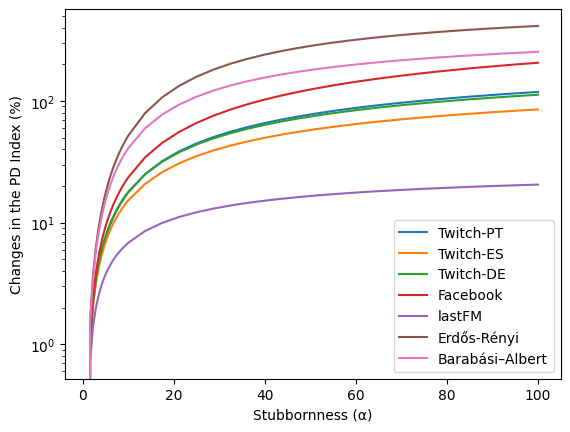}
\caption{The change in the $\mathcal{PD}$ (in logarithmic scale) with variation in (homogeneous) stubbornness across different networks. 
} \label{PD_versus_alphe_homogeneous}
\end{figure}
Aligned with our theoretical findings (Theorem~\ref{P_PD_K_alpha_I_theorem}), higher homogeneous stubbornness in the network correlates with an increased $\mathcal{PD}$. In fact, large values of $\alpha$ can almost double the value of $\mathcal{PD}$. However, as $\alpha$ increases, its impact starts to vanish.
The explanation for this is that as the stubbornness of a node increases, it gives more weight to its innate opinion until it eventually almost disregards its neighbors' opinion. Increasing the stubbornness beyond that point doesn't have much of an impact. Particularly, once the stubbornness of a node passes its degree, it already gives more weights to its own opinion than its neighbors. This is in line with the observation that the increase in $\mathcal{PD}$ is most dramatic when the stubbornness factor is below $20$-$30$, which is comparable to the average degree in these graphs (to be more precise, the average degree for Twitch-PT is $32$, Twitch-ES $25$, Twitch-DE $32$, Facebook $43$, LastFM $7$, ER $25$ and BA $23$).

\begin{center}
\textit{Increasing stubbornness increases $\mathcal{PD}$ ``significantly'' and there is a smooth phase transition around the average degree.}
\end{center}

\noindent \textbf{III. The Governing Parameters.}
We consider nodes of low degree, average degree, and high degree. For each of these, we consider nodes with an innate opinion close to 0 (neutral nodes) and nodes with an innate opinion smaller/larger than 0. This gives of 6 possible cases. We study the change in $\mathcal{PD}$ for all these cases and the results are reported in Table~\ref{table:Twitch_PT_FB}.

We used a uniform distribution for innate opinions. Then, we increased the stubbornness of a selected subset of nodes ($1\%$ of the total) to a value of $10$. The number of instances where $\mathcal{PD}$ either increased (positive samples) or decreased (negative samples) were calculated. We, first, observe that in most cases, increasing the stubbornness of a node with a non-zero innate opinion tends to increase $\mathcal{PD}$. Conversely, increasing the stubbornness of neutral nodes tends to decrease $\mathcal{PD}$. (This is inline with our theoretical results from Theorem~\ref{thm:neutral}.) The decrease in the second scenario is relatively small, while the increase in the first scenario (increasing the stubbornness of a node with a non-zero innate opinion) can lead to significant increase in the $\mathcal{PD}$.

We also observe that the degree of a node, surprisingly, does not seem to play a significant role in the magnitude of change. In all cases, the magnitude of change remains consistent across both low degree and high degree nodes, with no significant difference observed.

\begin{center}
\textit{While being neutral or not determines whether a node's stubbornness increases or decreases $\mathcal{PD}$, its degree doesn't seem to have an impact on the magnitude of the change.}
\end{center}

\noindent \textbf{IV. Can Stubborn Extreme Nodes Reduce Polarization?} Surprisingly, the answer to this question is yes. So far, we demonstrated, both theoretically and experimentally, increasing the stubbornness of neutral nodes results in lower $\mathcal{PD}$. This is surprising by itself, but even more surprisingly, we observe that the carefully chosen extreme nodes (nodes with innate opinion far from 0) can produce a similar outcome.
For this purpose, we consider the SBM with two blocks where $p=0.30$ and $q$ is variable. The nodes' opinions are chosen based on a Gaussian distribution between $-1$ and $+1$ for both groups. The opinions in the first and second group are biased towards $-1/2$ and $+1/2$, respectively. This is consistent with real-world scenarios, where individuals form bubbles with other people of similar opinion.

We selected the nodes in each block holding the most opposing opinions and increased their stubbornness to $+10$. The average changes in the $\mathcal{PD}$ index are plotted in Figure~\ref{P_D_PD}. The main observation that can be drawn is that for small values of $q$, the $\mathcal{PD}$ decreases compared to the FJ model. This is intuitive because there are fewer interactions between different bubbles, and as a result the blocks converge towards $-1/2$ and $+1/2$, respectively. In that case, stubborn nodes of opposing opinion in each side push the average opinion towards 0, which helps to reduce $\mathcal{PD}$. However, as interactions between nodes increase ($q$ becomes larger), this change has the opposite effect. This is because as $q$ grows, the bubble structure starts vanishing. In that case, the nodes will already converge to a value close to 0 and adding stubborn nodes can cause an increase in $\mathcal{PD}$, rather than decrease.

\begin{figure}[H]
\includegraphics[width=7cm, height=5cm]{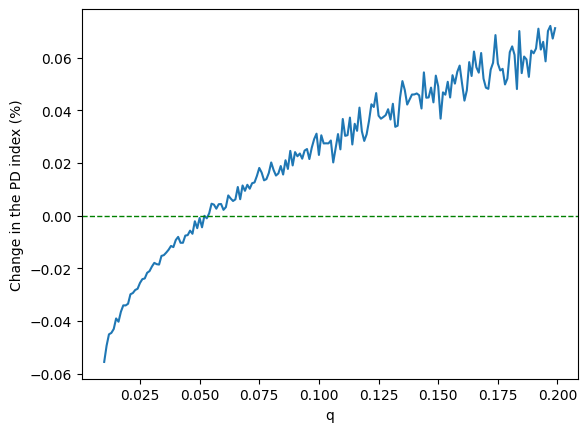}
\caption{The average change in the $\mathcal{PD}$ in the SBM with two positive/negative bubbles when the stubbornness of nodes with opposing opinions in each bubble is increased.}
\label{P_D_PD}
\end{figure}

\begin{center}
\textit{Increasing the stubbornness of nodes with an opposing opinion in ``bubbles'' can reduce $\mathcal{PD}$.}
\end{center}

\section{Conclusion} \label{conclusion}

We studied the impact of users' stubbornness on the polarization-disagreement ($\mathcal{PD}$), in the framework of Friedkin-Johnsen opinion formation model. We investigated the characterization of the circumstances where an increase in stubbornness causes an increase/decrease in $\mathcal{PD}$. Based on our theoretical and experimental results, we observed that in most scenarios stubbornness and $\mathcal{PD}$ are positively correlated. However, interestingly, we discovered two types of nodes whose increase in stubbornness can reduce $\mathcal{PD}$: neutral nodes (nodes whose innate opinion is close to the mean opinion) and nodes who have an opposing opinion to most nodes in their bubble.

There are several potential avenues for future research. Our experimental results strongly suggest that in complete graphs, increasing the stubbornness coefficient of a node will lead to an increase in $\mathcal{PD}$, regardless of the innate opinion vector. Can this be validated? What are other graph structures where increasing the stubbornness of some nodes will always increase the total $\mathcal{PD}$? Another intriguing question is: given a fixed budget \(k\), how should we choose \(k\) nodes in a network and adjust their stubbornness to achieve the maximum reduction in the $\mathcal{PD}$?

\bibliographystyle{ACM-Reference-Format}

\begin{thebibliography}{44}


\ifx \showCODEN    \undefined \def \showCODEN     #1{\unskip}     \fi
\ifx \showDOI      \undefined \def \showDOI       #1{#1}\fi
\ifx \showISBNx    \undefined \def \showISBNx     #1{\unskip}     \fi
\ifx \showISBNxiii \undefined \def \showISBNxiii  #1{\unskip}     \fi
\ifx \showISSN     \undefined \def \showISSN      #1{\unskip}     \fi
\ifx \showLCCN     \undefined \def \showLCCN      #1{\unskip}     \fi
\ifx \shownote     \undefined \def \shownote      #1{#1}          \fi
\ifx \showarticletitle \undefined \def \showarticletitle #1{#1}   \fi
\ifx \showURL      \undefined \def \showURL       {\relax}        \fi
\providecommand\bibfield[2]{#2}
\providecommand\bibinfo[2]{#2}
\providecommand\natexlab[1]{#1}
\providecommand\showeprint[2][]{arXiv:#2}

\bibitem[Abebe et~al\mbox{.}(2021)]%
        {abebe2021opinion}
\bibfield{author}{\bibinfo{person}{Rediet Abebe}, \bibinfo{person}{T-H~HUBERT Chan}, \bibinfo{person}{Jon Kleinberg}, \bibinfo{person}{Zhibin Liang}, \bibinfo{person}{David Parkes}, \bibinfo{person}{Mauro Sozio}, {and} \bibinfo{person}{Charalampos~E Tsourakakis}.} \bibinfo{year}{2021}\natexlab{}.
\newblock \showarticletitle{Opinion dynamics optimization by varying susceptibility to persuasion via non-convex local search}.
\newblock \bibinfo{journal}{\emph{ACM Transactions on Knowledge Discovery from Data (TKDD)}} \bibinfo{volume}{16}, \bibinfo{number}{2} (\bibinfo{year}{2021}), \bibinfo{pages}{1--34}.
\newblock


\bibitem[Abebe et~al\mbox{.}(2018)]%
        {abebe2018opinion}
\bibfield{author}{\bibinfo{person}{Rediet Abebe}, \bibinfo{person}{Jon Kleinberg}, \bibinfo{person}{David Parkes}, {and} \bibinfo{person}{Charalampos~E Tsourakakis}.} \bibinfo{year}{2018}\natexlab{}.
\newblock \showarticletitle{Opinion dynamics with varying susceptibility to persuasion}. In \bibinfo{booktitle}{\emph{Proceedings of the 24th ACM SIGKDD International Conference on Knowledge Discovery \& Data Mining}}. \bibinfo{pages}{1089--1098}.
\newblock


\bibitem[Acemoglu and Ozdaglar(2011)]%
        {acemoglu2011opinion}
\bibfield{author}{\bibinfo{person}{Daron Acemoglu} {and} \bibinfo{person}{Asuman Ozdaglar}.} \bibinfo{year}{2011}\natexlab{}.
\newblock \showarticletitle{Opinion dynamics and learning in social networks}.
\newblock \bibinfo{journal}{\emph{Dynamic Games and Applications}}  \bibinfo{volume}{1} (\bibinfo{year}{2011}), \bibinfo{pages}{3--49}.
\newblock


\bibitem[Bhalla et~al\mbox{.}(2023)]%
        {bhalla2023local}
\bibfield{author}{\bibinfo{person}{Nikita Bhalla}, \bibinfo{person}{Adam Lechowicz}, {and} \bibinfo{person}{Cameron Musco}.} \bibinfo{year}{2023}\natexlab{}.
\newblock \showarticletitle{Local edge dynamics and opinion polarization}. In \bibinfo{booktitle}{\emph{Proceedings of the Sixteenth ACM International Conference on Web Search and Data Mining}}. \bibinfo{pages}{6--14}.
\newblock


\bibitem[Bindel et~al\mbox{.}(2015)]%
        {bindel2015bad}
\bibfield{author}{\bibinfo{person}{David Bindel}, \bibinfo{person}{Jon Kleinberg}, {and} \bibinfo{person}{Sigal Oren}.} \bibinfo{year}{2015}\natexlab{}.
\newblock \showarticletitle{How bad is forming your own opinion?}
\newblock \bibinfo{journal}{\emph{Games and Economic Behavior}}  \bibinfo{volume}{92} (\bibinfo{year}{2015}), \bibinfo{pages}{248--265}.
\newblock


\bibitem[Chitra and Musco(2019)]%
        {chitra2019understanding}
\bibfield{author}{\bibinfo{person}{Uthsav Chitra} {and} \bibinfo{person}{Christopher Musco}.} \bibinfo{year}{2019}\natexlab{}.
\newblock \showarticletitle{Understanding filter bubbles and polarization in social networks}.
\newblock \bibinfo{journal}{\emph{arXiv preprint arXiv:1906.08772}} (\bibinfo{year}{2019}).
\newblock


\bibitem[Chitra and Musco(2020)]%
        {chitra2020analyzing}
\bibfield{author}{\bibinfo{person}{Uthsav Chitra} {and} \bibinfo{person}{Christopher Musco}.} \bibinfo{year}{2020}\natexlab{}.
\newblock \showarticletitle{Analyzing the impact of filter bubbles on social network polarization}. In \bibinfo{booktitle}{\emph{Proceedings of the 13th International Conference on Web Search and Data Mining}}. \bibinfo{pages}{115--123}.
\newblock


\bibitem[Das et~al\mbox{.}(2014)]%
        {das2014modeling}
\bibfield{author}{\bibinfo{person}{Abhimanyu Das}, \bibinfo{person}{Sreenivas Gollapudi}, {and} \bibinfo{person}{Kamesh Munagala}.} \bibinfo{year}{2014}\natexlab{}.
\newblock \showarticletitle{Modeling opinion dynamics in social networks}. In \bibinfo{booktitle}{\emph{Proceedings of the 7th ACM International Conference on Web search and Data Mining}}. \bibinfo{pages}{403--412}.
\newblock


\bibitem[Deffuant et~al\mbox{.}(2000)]%
        {deffuant2000mixing}
\bibfield{author}{\bibinfo{person}{Guillaume Deffuant}, \bibinfo{person}{David Neau}, \bibinfo{person}{Frederic Amblard}, {and} \bibinfo{person}{G{\'e}rard Weisbuch}.} \bibinfo{year}{2000}\natexlab{}.
\newblock \showarticletitle{Mixing beliefs among interacting agents}.
\newblock \bibinfo{journal}{\emph{Advances in Complex Systems}} \bibinfo{volume}{3}, \bibinfo{number}{01n04} (\bibinfo{year}{2000}), \bibinfo{pages}{87--98}.
\newblock


\bibitem[DeGroot(1974)]%
        {degroot1974reaching}
\bibfield{author}{\bibinfo{person}{Morris~H DeGroot}.} \bibinfo{year}{1974}\natexlab{}.
\newblock \showarticletitle{Reaching a consensus}.
\newblock \bibinfo{journal}{\emph{Journal of the American Statistical association}} \bibinfo{volume}{69}, \bibinfo{number}{345} (\bibinfo{year}{1974}), \bibinfo{pages}{118--121}.
\newblock


\bibitem[DellaPosta(2020)]%
        {dellaposta2020pluralistic}
\bibfield{author}{\bibinfo{person}{Daniel DellaPosta}.} \bibinfo{year}{2020}\natexlab{}.
\newblock \showarticletitle{Pluralistic collapse: The “oil spill” model of mass opinion polarization}.
\newblock \bibinfo{journal}{\emph{American Sociological Review}} \bibinfo{volume}{85}, \bibinfo{number}{3} (\bibinfo{year}{2020}), \bibinfo{pages}{507--536}.
\newblock


\bibitem[Friedkin and Johnsen(1990)]%
        {friedkin1990social}
\bibfield{author}{\bibinfo{person}{Noah~E Friedkin} {and} \bibinfo{person}{Eugene~C Johnsen}.} \bibinfo{year}{1990}\natexlab{}.
\newblock \showarticletitle{Social influence and opinions}.
\newblock \bibinfo{journal}{\emph{Journal of Mathematical Sociology}} \bibinfo{volume}{15}, \bibinfo{number}{3-4} (\bibinfo{year}{1990}), \bibinfo{pages}{193--206}.
\newblock


\bibitem[G{\"a}rtner and Zehmakan(2018)]%
        {gartner2018majority}
\bibfield{author}{\bibinfo{person}{Bernd G{\"a}rtner} {and} \bibinfo{person}{Ahad~N Zehmakan}.} \bibinfo{year}{2018}\natexlab{}.
\newblock \showarticletitle{Majority model on random regular graphs}. In \bibinfo{booktitle}{\emph{LATIN 2018: Theoretical Informatics: 13th Latin American Symposium, Buenos Aires, Argentina, April 16-19, 2018, Proceedings 13}}. Springer, \bibinfo{pages}{572--583}.
\newblock


\bibitem[G{\"a}rtner and Zehmakan(2020)]%
        {gartner2020threshold}
\bibfield{author}{\bibinfo{person}{Bernd G{\"a}rtner} {and} \bibinfo{person}{Ahad~N Zehmakan}.} \bibinfo{year}{2020}\natexlab{}.
\newblock \showarticletitle{Threshold behavior of democratic opinion dynamics}.
\newblock \bibinfo{journal}{\emph{Journal of Statistical Physics}}  \bibinfo{volume}{178} (\bibinfo{year}{2020}), \bibinfo{pages}{1442--1466}.
\newblock


\bibitem[Golub and Van~Loan(2013)]%
        {golub2013matrix}
\bibfield{author}{\bibinfo{person}{Gene~H Golub} {and} \bibinfo{person}{Charles~F Van~Loan}.} \bibinfo{year}{2013}\natexlab{}.
\newblock \bibinfo{booktitle}{\emph{Matrix Computations}}.
\newblock \bibinfo{publisher}{JHU Press}.
\newblock


\bibitem[Haddadan et~al\mbox{.}(2021)]%
        {haddadan2021repbublik}
\bibfield{author}{\bibinfo{person}{Shahrzad Haddadan}, \bibinfo{person}{Cristina Menghini}, \bibinfo{person}{Matteo Riondato}, {and} \bibinfo{person}{Eli Upfal}.} \bibinfo{year}{2021}\natexlab{}.
\newblock \showarticletitle{Repbublik: Reducing polarized bubble radius with link insertions}. In \bibinfo{booktitle}{\emph{Proceedings of the 14th ACM International Conference on Web Search and Data Mining}}. \bibinfo{pages}{139--147}.
\newblock


\bibitem[Hart and Nisbet(2012)]%
        {hart2012boomerang}
\bibfield{author}{\bibinfo{person}{P~Sol Hart} {and} \bibinfo{person}{Erik~C Nisbet}.} \bibinfo{year}{2012}\natexlab{}.
\newblock \showarticletitle{Boomerang effects in science communication: How motivated reasoning and identity cues amplify opinion polarization about climate mitigation policies}.
\newblock \bibinfo{journal}{\emph{Communication research}} \bibinfo{volume}{39}, \bibinfo{number}{6} (\bibinfo{year}{2012}), \bibinfo{pages}{701--723}.
\newblock


\bibitem[Horn and Johnson(2012)]%
        {horn2012matrix}
\bibfield{author}{\bibinfo{person}{Roger~A Horn} {and} \bibinfo{person}{Charles~R Johnson}.} \bibinfo{year}{2012}\natexlab{}.
\newblock \bibinfo{booktitle}{\emph{Matrix Analysis}}.
\newblock \bibinfo{publisher}{Cambridge University Press}.
\newblock


\bibitem[Huang et~al\mbox{.}(2022)]%
        {huang2022pole}
\bibfield{author}{\bibinfo{person}{Zexi Huang}, \bibinfo{person}{Arlei Silva}, {and} \bibinfo{person}{Ambuj Singh}.} \bibinfo{year}{2022}\natexlab{}.
\newblock \showarticletitle{Pole: Polarized embedding for signed networks}. In \bibinfo{booktitle}{\emph{Proceedings of the Fifteenth ACM International Conference on Web Search and Data Mining}}. \bibinfo{pages}{390--400}.
\newblock


\bibitem[Hunter and Zaman(2022)]%
        {hunter2022optimizing}
\bibfield{author}{\bibinfo{person}{David~Scott Hunter} {and} \bibinfo{person}{Tauhid Zaman}.} \bibinfo{year}{2022}\natexlab{}.
\newblock \showarticletitle{Optimizing opinions with stubborn agents}.
\newblock \bibinfo{journal}{\emph{Operations Research}} \bibinfo{volume}{70}, \bibinfo{number}{4} (\bibinfo{year}{2022}), \bibinfo{pages}{2119--2137}.
\newblock


\bibitem[Leskovec and Mcauley(2012)]%
        {leskovec2012learning}
\bibfield{author}{\bibinfo{person}{Jure Leskovec} {and} \bibinfo{person}{Julian Mcauley}.} \bibinfo{year}{2012}\natexlab{}.
\newblock \showarticletitle{Learning to discover social circles in ego networks}.
\newblock \bibinfo{journal}{\emph{Advances in Neural Information Processing Systems}}  \bibinfo{volume}{25} (\bibinfo{year}{2012}).
\newblock


\bibitem[Musco et~al\mbox{.}(2018)]%
        {musco2018minimizing}
\bibfield{author}{\bibinfo{person}{Cameron Musco}, \bibinfo{person}{Christopher Musco}, {and} \bibinfo{person}{Charalampos~E Tsourakakis}.} \bibinfo{year}{2018}\natexlab{}.
\newblock \showarticletitle{Minimizing polarization and disagreement in social networks}. In \bibinfo{booktitle}{\emph{Proceedings of the 2018 World Wide Web Conference}}. \bibinfo{pages}{369--378}.
\newblock


\bibitem[N~Zehmakan and Galam(2020)]%
        {n2020rumor}
\bibfield{author}{\bibinfo{person}{Ahad N~Zehmakan} {and} \bibinfo{person}{Serge Galam}.} \bibinfo{year}{2020}\natexlab{}.
\newblock \showarticletitle{Rumor spreading: A trigger for proliferation or fading away}.
\newblock \bibinfo{journal}{\emph{Chaos: An Interdisciplinary Journal of Nonlinear Science}} \bibinfo{volume}{30}, \bibinfo{number}{7} (\bibinfo{year}{2020}).
\newblock


\bibitem[Neumann et~al\mbox{.}(2024)]%
        {neumann2024sublinear}
\bibfield{author}{\bibinfo{person}{Stefan Neumann}, \bibinfo{person}{Yinhao Dong}, {and} \bibinfo{person}{Pan Peng}.} \bibinfo{year}{2024}\natexlab{}.
\newblock \showarticletitle{Sublinear-Time Opinion Estimation in the Friedkin--Johnsen Model}. In \bibinfo{booktitle}{\emph{Proceedings of the ACM on Web Conference 2024}}. \bibinfo{pages}{2563--2571}.
\newblock


\bibitem[Noorazar(2020)]%
        {noorazar2020recent}
\bibfield{author}{\bibinfo{person}{Hossein Noorazar}.} \bibinfo{year}{2020}\natexlab{}.
\newblock \showarticletitle{Recent advances in opinion propagation dynamics: A 2020 survey}.
\newblock \bibinfo{journal}{\emph{The European Physical Journal Plus}}  \bibinfo{volume}{135} (\bibinfo{year}{2020}), \bibinfo{pages}{1--20}.
\newblock


\bibitem[Out et~al\mbox{.}(2024)]%
        {out2024impact}
\bibfield{author}{\bibinfo{person}{Charlotte Out}, \bibinfo{person}{Sijing Tu}, \bibinfo{person}{Stefan Neumann}, {and} \bibinfo{person}{Ahad~N Zehmakan}.} \bibinfo{year}{2024}\natexlab{}.
\newblock \showarticletitle{The Impact of External Sources on the Friedkin--Johnsen Model}. In \bibinfo{booktitle}{\emph{Proceedings of the 33rd ACM International Conference on Information and Knowledge Management}}. \bibinfo{pages}{1815--1824}.
\newblock


\bibitem[Out and Zehmakan(2021)]%
        {out2021majority}
\bibfield{author}{\bibinfo{person}{Charlotte Out} {and} \bibinfo{person}{Ahad~N Zehmakan}.} \bibinfo{year}{2021}\natexlab{}.
\newblock \showarticletitle{Majority vote in social networks: Make random friends or be stubborn to overpower elites}.
\newblock \bibinfo{journal}{\emph{The 30th International Joint Conference on Artificial Intelligence}} (\bibinfo{year}{2021}).
\newblock


\bibitem[Plemmons(1977)]%
        {plemmons1977m}
\bibfield{author}{\bibinfo{person}{Robert~J Plemmons}.} \bibinfo{year}{1977}\natexlab{}.
\newblock \showarticletitle{M-matrix characterizations. I—nonsingular M-matrices}.
\newblock \bibinfo{journal}{\emph{Linear Algebra Appl.}} \bibinfo{volume}{18}, \bibinfo{number}{2} (\bibinfo{year}{1977}), \bibinfo{pages}{175--188}.
\newblock


\bibitem[R{\'a}cz and Rigobon(2023)]%
        {racz2023towards}
\bibfield{author}{\bibinfo{person}{Miklos~Z R{\'a}cz} {and} \bibinfo{person}{Daniel~E Rigobon}.} \bibinfo{year}{2023}\natexlab{}.
\newblock \showarticletitle{Towards consensus: Reducing polarization by perturbing social networks}.
\newblock \bibinfo{journal}{\emph{IEEE Transactions on Network Science and Engineering}} (\bibinfo{year}{2023}).
\newblock


\bibitem[Rastegarpanah et~al\mbox{.}(2019)]%
        {rastegarpanah2019fighting}
\bibfield{author}{\bibinfo{person}{Bashir Rastegarpanah}, \bibinfo{person}{Krishna~P Gummadi}, {and} \bibinfo{person}{Mark Crovella}.} \bibinfo{year}{2019}\natexlab{}.
\newblock \showarticletitle{Fighting fire with fire: Using antidote data to improve polarization and fairness of recommender systems}. In \bibinfo{booktitle}{\emph{Proceedings of the twelfth ACM international Conference on Web Search and Data Mining}}. \bibinfo{pages}{231--239}.
\newblock


\bibitem[Rozemberczki et~al\mbox{.}(2021)]%
        {rozemberczki2019multiscale}
\bibfield{author}{\bibinfo{person}{Benedek Rozemberczki}, \bibinfo{person}{Carl Allen}, {and} \bibinfo{person}{Rik Sarkar}.} \bibinfo{year}{2021}\natexlab{}.
\newblock \showarticletitle{Multi-scale attributed node embedding}.
\newblock \bibinfo{journal}{\emph{Journal of Complex Networks}} \bibinfo{volume}{9}, \bibinfo{number}{2} (\bibinfo{year}{2021}), \bibinfo{pages}{cnab014}.
\newblock


\bibitem[Rozemberczki and Sarkar(2020)]%
        {feather}
\bibfield{author}{\bibinfo{person}{Benedek Rozemberczki} {and} \bibinfo{person}{Rik Sarkar}.} \bibinfo{year}{2020}\natexlab{}.
\newblock \showarticletitle{{Characteristic Functions on Graphs: Birds of a Feather, from Statistical Descriptors to Parametric Models}}. In \bibinfo{booktitle}{\emph{Proceedings of the 29th ACM International Conference on Information and Knowledge Management (CIKM '20)}}. ACM, \bibinfo{pages}{1325–1334}.
\newblock


\bibitem[Sherman and Morrison(1950)]%
        {sherman1950adjustment}
\bibfield{author}{\bibinfo{person}{Jack Sherman} {and} \bibinfo{person}{Winifred~J Morrison}.} \bibinfo{year}{1950}\natexlab{}.
\newblock \showarticletitle{Adjustment of an inverse matrix corresponding to a change in one element of a given matrix}.
\newblock \bibinfo{journal}{\emph{The Annals of Mathematical Statistics}} \bibinfo{volume}{21}, \bibinfo{number}{1} (\bibinfo{year}{1950}), \bibinfo{pages}{124--127}.
\newblock


\bibitem[Steinberg and Monahan(2007)]%
        {steinberg2007age}
\bibfield{author}{\bibinfo{person}{Laurence Steinberg} {and} \bibinfo{person}{Kathryn~C Monahan}.} \bibinfo{year}{2007}\natexlab{}.
\newblock \showarticletitle{Age differences in resistance to peer influence.}
\newblock \bibinfo{journal}{\emph{Developmental Psychology}} \bibinfo{volume}{43}, \bibinfo{number}{6} (\bibinfo{year}{2007}), \bibinfo{pages}{1531}.
\newblock


\bibitem[Sun et~al\mbox{.}(2023a)]%
        {sun2023all}
\bibfield{author}{\bibinfo{person}{Xiangguo Sun}, \bibinfo{person}{Hong Cheng}, \bibinfo{person}{Jia Li}, \bibinfo{person}{Bo Liu}, {and} \bibinfo{person}{Jihong Guan}.} \bibinfo{year}{2023}\natexlab{a}.
\newblock \showarticletitle{All in one: Multi-task prompting for graph neural networks}. In \bibinfo{booktitle}{\emph{Proceedings of the 29th ACM SIGKDD Conference on Knowledge Discovery and Data Mining}}. \bibinfo{pages}{2120--2131}.
\newblock


\bibitem[Sun et~al\mbox{.}(2023b)]%
        {sun2023self}
\bibfield{author}{\bibinfo{person}{Xiangguo Sun}, \bibinfo{person}{Hong Cheng}, \bibinfo{person}{Bo Liu}, \bibinfo{person}{Jia Li}, \bibinfo{person}{Hongyang Chen}, \bibinfo{person}{Guandong Xu}, {and} \bibinfo{person}{Hongzhi Yin}.} \bibinfo{year}{2023}\natexlab{b}.
\newblock \showarticletitle{Self-supervised hypergraph representation learning for sociological analysis}.
\newblock \bibinfo{journal}{\emph{IEEE Transactions on Knowledge and Data Engineering}} \bibinfo{volume}{35}, \bibinfo{number}{11} (\bibinfo{year}{2023}), \bibinfo{pages}{11860--11871}.
\newblock


\bibitem[Wall et~al\mbox{.}(1993)]%
        {wall1993susceptibility}
\bibfield{author}{\bibinfo{person}{Julie~A Wall}, \bibinfo{person}{Thomas~G Power}, {and} \bibinfo{person}{Consuelo Arbona}.} \bibinfo{year}{1993}\natexlab{}.
\newblock \showarticletitle{Susceptibility to antisocial peer pressure and its relation to acculturation in Mexican-American adolescents}.
\newblock \bibinfo{journal}{\emph{Journal of Adolescent Research}} \bibinfo{volume}{8}, \bibinfo{number}{4} (\bibinfo{year}{1993}), \bibinfo{pages}{403--418}.
\newblock


\bibitem[Wang and Kleinberg(2024)]%
        {wang2024relationship}
\bibfield{author}{\bibinfo{person}{Yanbang Wang} {and} \bibinfo{person}{Jon Kleinberg}.} \bibinfo{year}{2024}\natexlab{}.
\newblock \showarticletitle{On the relationship between relevance and conflict in online social link recommendations}.
\newblock \bibinfo{journal}{\emph{Advances in Neural Information Processing Systems}}  \bibinfo{volume}{36} (\bibinfo{year}{2024}).
\newblock


\bibitem[Xu et~al\mbox{.}(2022)]%
        {xu2022effects}
\bibfield{author}{\bibinfo{person}{Wanyue Xu}, \bibinfo{person}{Liwang Zhu}, \bibinfo{person}{Jiale Guan}, \bibinfo{person}{Zuobai Zhang}, {and} \bibinfo{person}{Zhongzhi Zhang}.} \bibinfo{year}{2022}\natexlab{}.
\newblock \showarticletitle{Effects of stubbornness on opinion dynamics}. In \bibinfo{booktitle}{\emph{Proceedings of the 31st ACM International Conference on Information \& Knowledge Management}}. \bibinfo{pages}{2321--2330}.
\newblock


\bibitem[Zehmakan(2019)]%
        {zehmakan2019spread}
\bibfield{author}{\bibinfo{person}{Abdolahad~N Zehmakan}.} \bibinfo{year}{2019}\natexlab{}.
\newblock \emph{\bibinfo{title}{On the spread of information through graphs}}.
\newblock \bibinfo{thesistype}{Ph.\,D. Dissertation}. \bibinfo{school}{ETH Zurich}.
\newblock


\bibitem[Zehmakan(2021)]%
        {zehmakan2021majority}
\bibfield{author}{\bibinfo{person}{Ahad~N Zehmakan}.} \bibinfo{year}{2021}\natexlab{}.
\newblock \showarticletitle{Majority opinion diffusion in social networks: An adversarial approach}. In \bibinfo{booktitle}{\emph{Proceedings of the AAAI Conference on Artificial Intelligence}}, Vol.~\bibinfo{volume}{35}. \bibinfo{pages}{5611--5619}.
\newblock


\bibitem[Zehmakan(2023)]%
        {zehmakan2023random}
\bibfield{author}{\bibinfo{person}{Ahad~N Zehmakan}.} \bibinfo{year}{2023}\natexlab{}.
\newblock \showarticletitle{Random Majority Opinion Diffusion: Stabilization Time, Absorbing States, and Influential Nodes}. In \bibinfo{booktitle}{\emph{Proceedings of the 2023 International Conference on Autonomous Agents and Multiagent Systems}}. \bibinfo{pages}{2179--2187}.
\newblock


\bibitem[Zehmakan et~al\mbox{.}(2024)]%
        {zehmakan2024viral}
\bibfield{author}{\bibinfo{person}{Ahad~N Zehmakan}, \bibinfo{person}{Xiaotian Zhou}, {and} \bibinfo{person}{Zhongzhi Zhang}.} \bibinfo{year}{2024}\natexlab{}.
\newblock \showarticletitle{Viral Marketing in Social Networks with Competing Products}. In \bibinfo{booktitle}{\emph{Proceedings of the 23rd International Conference on Autonomous Agents and Multiagent Systems}}. \bibinfo{pages}{2047--2056}.
\newblock


\bibitem[Zhu et~al\mbox{.}(2021)]%
        {zhu2021minimizing}
\bibfield{author}{\bibinfo{person}{Liwang Zhu}, \bibinfo{person}{Qi Bao}, {and} \bibinfo{person}{Zhongzhi Zhang}.} \bibinfo{year}{2021}\natexlab{}.
\newblock \showarticletitle{Minimizing polarization and disagreement in social networks via link recommendation}.
\newblock \bibinfo{journal}{\emph{Advances in Neural Information Processing Systems}}  \bibinfo{volume}{34} (\bibinfo{year}{2021}), \bibinfo{pages}{2072--2084}.
\newblock


\end{thebibliography}

\section{Appendix} \label{appendix}

\subsection{Results for Alternative Definition}\label{Xue_result_appendix}


In this section, we explore the alternative definition of polarization proposed by Xu et al.~\cite{xu2022effects}. We believe that our proof techniques can be leveraged to produce similar results in this setup. Below, we show how some of our results can be easily expanded to this setting.

In the new setup, in the presence of stubborn users, polarization is defined as follows:
\begin{equation}\label{polarization_xu}
    \mathcal{P}_{G,z^*}:=\sum_{u \in V} k_u \bar{z}_u^2=\bar{\textbf{z}}^T \textbf{K} \bar{\textbf{z}}, 
\end{equation}
where $\bar{z}$ is the mean-centralized version of $z$ and $k_u$ is the stubbornness of node $u \in V$. Considering disagreement to be the original definition in Equation~\eqref{disagreement}, the $\mathcal{PD}$ of social network $G$ with Laplacian matrix $\textbf{L}$, stubbornness matrix $\textbf{K}$ and innate opinions $ s $  can be simplified to
\begin{equation}\label{PD}
     \mathcal{PD}:=\mathcal{P}_{G,z^*}+ \mathcal{D}_{G,z^*}=\bar{s}_{\scriptscriptstyle\textbf{K}}^T \textbf{K} (\textbf{K}+\textbf{L})^{-1} \textbf{K} \bar{s}_{\scriptscriptstyle\textbf{K}}.
\end{equation}

In the following proposition, we will modify Theorem~\ref{P_PD_K_alpha_I_theorem} to address the case where the $\mathcal{PD}$ changes under the assumption of homogeneity, i.e., when $\textbf{K} = \alpha \textbf{I}$. Note that in this scenario, the matrix expression $\textbf{K} (\textbf{K} + \textbf{L})^{-1} \textbf{K}$ simplifies to $\alpha (\textbf{I} + \alpha^{-1} \textbf{L})$.



\begin{theorem}\label{P_PD_K_alpha_I_theorem_xu}
Consider a simple, undirected, connected graph \( G = (V, E) \) with \( |V| = n \) and given innate opinion vector $s \in \mathbb{R}^n$, where $\| s \| \leq R$. Then for any  $\alpha \leq \beta$, 
\begin{equation*}
   \mathcal{PD}(\alpha) \leq \mathcal{PD}(\beta), 
\end{equation*}
and 
\begin{equation*}
\mathcal{PD}(\beta)-\mathcal{PD}(\alpha) \leq (\beta-\alpha) R^2. 
\end{equation*}
\end{theorem}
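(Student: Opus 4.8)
The plan is to reduce everything to the Laplacian eigenbasis, exactly as in Theorem~\ref{P_PD_K_alpha_I_theorem}, and then to analyze a single scalar rational function term by term. First I would specialize Equation~\eqref{PD} to $\textbf{K}=\alpha\textbf{I}$. As in the proof of Theorem~\ref{P_PD_K_alpha_I_theorem}, homogeneity forces $\bar{s}_{\scriptscriptstyle\textbf{K}}=\bar{s}$, and the central matrix collapses to $\textbf{K}(\textbf{K}+\textbf{L})^{-1}\textbf{K}=\alpha^2(\alpha\textbf{I}+\textbf{L})^{-1}=\alpha(\textbf{I}+\alpha^{-1}\textbf{L})^{-1}$. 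Hence $\mathcal{PD}(\alpha)=\alpha\,\bar{s}^T(\textbf{I}+\alpha^{-1}\textbf{L})^{-1}\bar{s}$.

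Next I would diagonalize. Expanding $\bar{s}=\sum_{j=2}^n\gamma_j q_j$ in the orthonormal Laplacian eigenbasis (the $j=1$ term drops because $\bar{s}$ is orthogonal to $\Vec{\textbf{1}}$, equivalently to $q_1$) and applying Equation~\eqref{f(L)} gives the closed form
\begin{equation*}
\mathcal{PD}(\alpha)=\sum_{j=2}^n\frac{\alpha^2}{\alpha+\lambda_j}\,\gamma_j^2.
\end{equation*}
Both assertions then reduce to properties of the scalar function $h_\lambda(\alpha):=\alpha^2/(\alpha+\lambda)$ at a fixed eigenvalue $\lambda=\lambda_j\ge 0$.

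The key step is the explicit difference, obtained by placing the two fractions over a common denominator and factoring out $(\beta-\alpha)$:
\begin{equation*}
h_\lambda(\beta)-h_\lambda(\alpha)=(\beta-\alpha)\,\frac{\alpha\beta+\lambda(\alpha+\beta)}{(\alpha+\lambda)(\beta+\lambda)}.
\end{equation*}
For $\alpha,\beta>0$ and $\lambda\ge 0$ the right-hand fraction lies in $[0,1]$: it is nonnegative since every factor is, and it is at most $1$ because the denominator exceeds the numerator by precisely $\lambda^2\ge 0$. Nonnegativity yields $h_\lambda(\beta)\ge h_\lambda(\alpha)$ for every $j$, hence $\mathcal{PD}(\alpha)\le\mathcal{PD}(\beta)$; the bound $h_\lambda(\beta)-h_\lambda(\alpha)\le\beta-\alpha$ controls each term of the difference.

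Finally I would sum over $j$ and use that mean-centering cannot increase the norm, i.e. $\sum_{j=2}^n\gamma_j^2=\|\bar{s}\|^2\le\|s\|^2\le R^2$, to conclude
\begin{equation*}
\mathcal{PD}(\beta)-\mathcal{PD}(\alpha)=\sum_{j=2}^n\big(h_{\lambda_j}(\beta)-h_{\lambda_j}(\alpha)\big)\gamma_j^2\le(\beta-\alpha)\sum_{j=2}^n\gamma_j^2\le(\beta-\alpha)R^2.
\end{equation*}
No step poses a serious obstacle; the only place requiring care is the algebraic factoring of the scalar difference, and in particular the observation that the leftover term in the denominator is exactly $\lambda^2$, which is what simultaneously delivers monotonicity and the clean Lipschitz-type bound.
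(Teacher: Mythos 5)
Your proposal is correct and follows essentially the same route as the paper: specialize to $\textbf{K}=\alpha\textbf{I}$ (so $\bar{s}_{\scriptscriptstyle\textbf{K}}=\bar{s}$), diagonalize in the Laplacian eigenbasis, and reduce both claims to the scalar function $\alpha^2/(\alpha+\lambda)$, then sum against $\gamma_j^2\le R^2$. The only difference is cosmetic: where the paper proves the scalar bound by showing the difference $E(x)=f(x,\beta)-f(x,\alpha)$ is decreasing in $x$ and hence maximized at $x=0$ with value $\beta-\alpha$, you obtain both monotonicity and the Lipschitz bound at once from the algebraic factorization $h_\lambda(\beta)-h_\lambda(\alpha)=(\beta-\alpha)\,\frac{\alpha\beta+\lambda(\alpha+\beta)}{(\alpha+\lambda)(\beta+\lambda)}$, whose fraction lies in $[0,1]$ because the denominator exceeds the numerator by exactly $\lambda^2$.
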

\begin{proof}
The first part of the proof follows from the same argument as presented in Theorem~\ref{P_PD_K_alpha_I_theorem}. For the proof of the second part of the proof, our approach is similar to the Theorem~\ref{p_changes_upper_bound}. We define $E(x)$ as follows 
\begin{equation*}
E(x):=f(x,\beta)-f(x,\alpha), \quad f(x,c)=\frac{c}{1+c^{-1}x},
\end{equation*}
and note that $E'(x) <0$, which means $E(x) \leq E(0)=\beta-\alpha$. Thus, taking into account that $\| \bar{s} \|_2 \leq \| s \|_2 \leq R$, the proof is thereby concluded.
\end{proof}

In the case of stochastic block model, we also have the following theorem. 

\begin{theorem}\label{SBM_theorem_xu}
Let $G=(V, E)$ be the stochastic block model with $|V|=n$, where the expected adjacency matrix defined in equation (\ref{adjacency_A}), and the innate opinion given as Equation~\eqref{intenal_opinion_SBM}. If the stubbornness matrix is given by $\textbf{K}=\alpha \textbf{I}$ for any given real value $\alpha \geq 1$, then the $\mathcal{PD}$ index will be obtained as 
\begin{equation*}
    \mathcal{PD}(\alpha)=\frac{\alpha^2 n}{nq+\alpha}. 
\end{equation*}
\end{theorem}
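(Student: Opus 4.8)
The plan is to mirror the computation in Theorem~\ref{SBM_theorem}, but starting from the alternative $\mathcal{PD}$ expression in Equation~\eqref{PD} rather than Equation~\eqref{PD_L_s_K}. First I would specialize to the homogeneous setting $\textbf{K}=\alpha\textbf{I}$. As already observed, in this case $\bar{s}_{\scriptscriptstyle\textbf{K}}=\bar{s}$, and for the SBM innate opinions of Equation~\eqref{intenal_opinion_SBM} one further has $\bar{s}=s$. The central simplification is that $\textbf{K}(\textbf{K}+\textbf{L})^{-1}\textbf{K}=\alpha^2(\alpha\textbf{I}+\textbf{L})^{-1}=\alpha(\textbf{I}+\alpha^{-1}\textbf{L})^{-1}$, so that $\mathcal{PD}(\alpha)=\alpha\,\bar{s}^T(\textbf{I}+\alpha^{-1}\textbf{L})^{-1}\bar{s}$.

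Next I would invoke the two observations recorded just before Theorem~\ref{SBM_theorem}: with the expected adjacency matrix of Equation~\eqref{adjacency_A}, the vector $\bar{s}$ is an eigenvector of the expected Laplacian $\textbf{L}$ with eigenvalue $qn$. Writing the eigendecomposition $(\textbf{I}+\alpha^{-1}\textbf{L})^{-1}=\sum_{j=1}^n\frac{1}{1+\alpha^{-1}\lambda_j}q_jq_j^T$ and expanding $\bar{s}=\|\bar{s}\|\,(\bar{s}/\|\bar{s}\|)$ in the orthonormal eigenbasis $\{q_j\}$, the orthonormality of the eigenvectors collapses the quadratic form to the single term attached to the eigenvalue $\lambda=qn$, exactly as in the proof of Theorem~\ref{SBM_theorem}.

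Finally I would substitute $\|\bar{s}\|^2=\|s\|^2=n$ (each coordinate of $s$ equals $\pm 1$), which gives
$$\mathcal{PD}(\alpha)=\frac{\alpha}{1+\alpha^{-1}qn}\,n=\frac{\alpha^2 n}{nq+\alpha},$$
as claimed. There is no genuine obstacle here: the whole argument reuses the same orthonormal-projection trick as Theorem~\ref{SBM_theorem}, and the only point requiring care is the bookkeeping of the middle matrix $\textbf{K}(\textbf{K}+\textbf{L})^{-1}\textbf{K}$, which carries one fewer inverse power of $\textbf{I}+\textbf{L}$ than the original $\mathcal{PD}$ operator. This missing factor is precisely what replaces the denominator $(nq+\alpha)^2$ from Theorem~\ref{SBM_theorem} by $(nq+\alpha)$ here, so the derivation is essentially self-checking. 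The stated hypothesis $\alpha\ge 1$ plays no role in the computation and could be relaxed to any $\alpha>0$.
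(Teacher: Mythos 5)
Your proposal is correct and follows essentially the same route as the paper: the paper's proof of Theorem~\ref{SBM_theorem_xu} simply repeats the argument of Theorem~\ref{SBM_theorem} with the operator replaced by $\alpha(\textbf{I}+\alpha^{-1}\textbf{L})^{-1}$, which is exactly the reduction you carry out via $\textbf{K}(\textbf{K}+\textbf{L})^{-1}\textbf{K}=\alpha(\textbf{I}+\alpha^{-1}\textbf{L})^{-1}$ followed by the same eigenvector-collapse step using $\bar{\textbf{L}}\bar{s}=qn\,\bar{s}$ and $\|\bar{s}\|^2=n$. Your side remark that the hypothesis $\alpha\ge 1$ is never used and $\alpha>0$ suffices is also accurate.
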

\begin{proof}
The proof will be followed with the same argument as Theorem~\ref{SBM_theorem} with this different that the $\mathcal{PD}$ matrix is now given by $\alpha (\textbf{I}+\alpha^{-1}\textbf{L})^{-1}$. 
\end{proof}

\end{document}